\definecolor{CitingColor}{rgb}{0,0.3,1}
\definecolor{nblue}{rgb}{0.2,0.2,0.9}
\newcommand{\bP}{\mathbb{P}}
\newcommand{\R}{\mathbb{R}}
\newcommand{\cA}{\mathcal{A}}
\newcommand{\cB}{\mathcal{B}}
\newcommand{\cC}{\mathcal{C}}
\newcommand{\cF}{\mathcal{F}}
\newcommand{\cG}{\mathcal{G}}
\newcommand{\cM}{\mathcal{M}}
\newcommand{\cK}{\mathcal{K}}
\newcommand{\Q}{\mathcal{Q}}
\newcommand{\cR}{\mathcal{R}}
\newcommand{\cS}{\mathcal{S}}
\newcommand{\cT}{\mathcal{T}}
\newcommand{\cU}{\mathcal{U}}
\newcommand{\cV}{\mathcal{V}}
\newcommand{\cX}{\mathcal{X}}
\newcommand{\cY}{\mathcal{Y}}
\newcommand{\cZ}{\mathcal{Z}}
\newcommand{\Hilbert}{\mathcal{H}}
\newcommand{\NS}{\mathcal{NS}}
\newcommand{\wtQ}{\widetilde{\Q}}
\newcommand{\EXT}{\mathrm{EXT}}
\newcommand{\id}{\mathds{1}}
\newcommand{\Hmin}{H_\mathrm{min}}
\newcommand{\Hmineps}{\Hmin^\epsilon}
\newcommand{\pguess}{p_\mathrm{guess}}
\newcommand{\Hsh}{H_\mathrm{sh}}
\newcommand{\kext}{k_\EXT}
\newcommand{\epsball}{\mathfrak{B}^\epsilon}
\newcommand{\deltah}{\delta_\mathrm{h}}
\newcommand{\DeltaExt}{\Delta_\EXT}
\newcommand{\DeltaInp}{\Delta_\mathrm{inp}}
\newcommand{\epsound}{\epsilon_\mathrm{s}}
\newcommand{\epscomp}{\epsilon_\mathrm{c}}
\newcommand{\epscw}{\epscomp^\mathrm{w}}
\newcommand{\epscz}{\epscomp^\mathrm{z}}
\newcommand{\epsext}{\epsilon_{\EXT}}
\newcommand{\wexp}{w_\mathrm{exp}}
\newcommand{\wtol}{w_\mathrm{tol}}
\newcommand{\ztol}{\eta_\mathrm{z}}
\newcommand{\CPTP}{\mathrm{CPTP}}
\newcommand{\freq}{\mathrm{freq}}
\newcommand{\tE}{\Tilde{E}}
\newcommand{\fgamma}{f_\gamma}
\newcommand{\fperp}{f_\perp}
\newcommand{\Max}{\mathrm{Max}}
\newcommand{\Min}{\mathrm{Min}}
\newcommand{\Var}{\mathrm{Var}}
\newcommand{\MinSig}{\Min_\Sigma}
\newcommand{\VarSig}{\Var_\Sigma}
\newcommand{\MinSigma}[1]{\mathrm{Min}_{\Sigma}(#1)}
\newcommand{\Sigamma}{\Sigma^\gamma}
\newcommand{\MinSigamma}{\Min_{\Sigamma}}
\newcommand{\VarSigamma}{\Var_{\Sigamma}}
\newcommand{\Clamb}{C_\lambda}
\newcommand{\Gamw}{\Gamma_\mathrm{win}}
\newcommand{\Gamzv}{\vec{\Gamma}_\mathrm{z}}
\newcommand{\Gamz}[1][]{
\if\relax \detokenize{#1}\relax
\Gamma_\mathrm{z}
\else\Gamma_\mathrm{zero, #1}
\fi}
\newcommand{\ztolv}{\vec{\eta}_\mathrm{z}}
\newcommand{\lambz}{\vec{\lambda}_\mathrm{z}}
\newcommand{\lambwv}{\vec{\lambda}_\mathrm{win}}
\newcommand{\lambw}[1][]{
\if\relax \detokenize{#1}\relax
\lambda_\mathrm{win}
\else\lambda_\mathrm{win, #1}
\fi}
\newcommand{\Clwz}{C_{\lambw, \lambz}}
\newcommand{\gamw}{\gamma_\mathrm{win}}
\newcommand{\gamzv}{\vec{\gamma}_\mathrm{z}}
\newcommand{\gamz}[1][]{
\if\relax \detokenize{#1}\relax
\gamma_\mathrm{z}
\else\gamma_\mathrm{zero, #1}
\fi}
\newcommand{\bracket}[1]{\langle #1\rangle}
\newcommand{\proj}[1]{|#1\rangle\langle #1|}
\newcommand{\set}[1]{\{#1\}}
\newcommand{\bigset}[1]{\bigl\{#1\bigr\}}
\newcommand{\setof}[2]{\set{#1}_{#2}}
\newcommand{\floor}[1]{\lfloor#1\rfloor}
\newcommand{\commutat}[2]{\left[#1, #2\right]}
\newcommand{\Finquad}[1]{F_{\mathrm{#1}, i}}
\newtheorem{definition}{Definition}
\newtheorem{theorem}{Theorem}
\newtheorem{lemma}{Lemma}
\newtheorem{fact}{Fact}
\newtheorem{protocol}{Protocol}
\begin{document}

\title{Incorporating Zero-Probability Constraints to Device-Independent Randomness Expansion}
\author{Chun-Yu Chen}
\affiliation{Institute of Information Science, Academia Sinica, Taipei 115, Taiwan}
\affiliation{Department of Physics and Center for Quantum Frontiers of Research \& Technology (QFort), National Cheng Kung University, Tainan 701, Taiwan}

\author{Kai-Siang Chen}
\affiliation{Department of Physics and Center for Quantum Frontiers of Research \& Technology (QFort), National Cheng Kung University, Tainan 701, Taiwan}

\author{Kai-Min Chung}
\affiliation{Institute of Information Science, Academia Sinica, Taipei 115, Taiwan}

\author{Min-Hsiu Hsieh}
\affiliation{Hon Hai (Foxconn) Research Institute, Taipei, Taiwan}

\author{Yeong-Cherng Liang}
\email{ycliang@mail.ncku.edu.tw}
\affiliation{Department of Physics and Center for Quantum Frontiers of Research \& Technology (QFort), National Cheng Kung University, Tainan 701, Taiwan}
\affiliation{Physics Division, National Center for Theoretical Sciences, Taipei 106, Taiwan}

\author{Gelo Noel M. Tabia}
\email{gelonoel-tabia@gs.ncku.edu.tw}
\affiliation{Department of Physics and Center for Quantum Frontiers of Research \& Technology (QFort), National Cheng Kung University, Tainan 701, Taiwan}
\affiliation{Physics Division, National Center for Theoretical Sciences, Taipei 106, Taiwan}

\begin{abstract}
One of the distinguishing features of quantum theory is that its measurement outcomes are usually unpredictable or, equivalently, {\em random}. Moreover, this randomness is certifiable with minimal assumptions in the so-called device-independent (DI) paradigm, where a device's behavior does not need to be presupposed but can be verified through the statistics it produces. In this work, we explore various forms of randomness that are certifiable in this setting, where two users can perform two binary-outcome measurements on their shared entangled state. In this case, even though the Clauser-Horne-Shimony-Holt (CHSH) Bell-inequality violation is a pre-requisite for the generation of DI certifiable randomness,  the CHSH value alone does not generally give a tight bound on the certifiable randomness. Here, we determine the certifiable randomness when zero-probability constraints are incorporated into the task of DI randomness expansion for the standard local and global randomness and the so-called "blind" randomness. Asymptotically, we observe consistent improvements in the amount of DI certifiable randomness (of all kinds) as we increase the number zero constraints for a wide range of given CHSH Bell violations. However, if we further optimize over the allowed CHSH values, then benefits of these additional constraints over the standard CHSH-based protocol are only found in the case of global and blind randomness. In contrast, in the regimes of finite data, these zero constraints only give a slight improvement in the local randomness rate when compared with all existing protocols.
\end{abstract}
\date{\today}

\maketitle

\section{Introduction}
\label{sec:Introduction}

In quantum cryptography~\cite{GRTZ02}, the device-independent~\cite{Scarani12,Brunner2014:RMP} (DI) paradigm offers a very attractive alternative to conventional schemes as it requires only a minimal set of assumptions for its security analysis~\cite{Ekert91, Mayers98, Barrett05, Colbeck06}. Indeed, as was first made explicit by Ekert~\cite{Ekert91}, the violation of a Bell inequality~\cite{Bell64} implies that that measurement outcome could not have existed before the measurement, thus leaving nothing for the adversary to eavesdrop. For example, in the simplest two-party Bell scenario, the family of Clauser-Horne-Shimony-Holt (CHSH) Bell inequalities~\cite{Clauser69} completely characterize~\cite{Fine:PRL:1982} the set of correlations admitting a local-hidden-variable model. Thus, it is natural that the CHSH parameter plays a crucial role  in many DI cryptographic tasks including quantum key distribution~\cite{Ekert91, Masanes09, Masanes11, Arnon-Friedman19, Nadlinger22, Zhang22, Tan22} and random number generation~\cite{Pironio10, Arnon-Friedman19, Brown19, Liu21}. 

Regarding single-party (local) randomness, a maximally CHSH-violating correlation yields up to 1 bit of maximum randomness~\cite{Pironio10}. However, for two-party (global) randomness, other Bell inequalities~\cite{Acin12, Bourdoncle_2019,Wooltorton22} have been found to give better randomness generation rate. Beyond these randomness forms, Miller and Shi~\cite{Miller17Jun} introduced the concept of ``blind'' randomness,\footnote{Though in~\cite{Miller17Jun}, the authors used the term ``local'' randomness, it may cause confusion since ``local'' randomness was already used to indicate the randomness from a single party. As a result, we follow the term ``blind'' randomness used in~\cite{Metger22Oct}.} signifying the unpredictability of one party's outcome even when given the other party's input and output. Clearly, this gives a strengthened version of local randomness. At the same time, this less-explored notion of randomness allows the design of cryptographic protocols for mistrustful collaborations, enabling parties to work toward a common objective without relying on mutual trust, for instance, in the certified deletion task~\cite{Fu18}. To this end, the work by Metger \textit{et al.}~\cite{Metger22Oct} (see also~\cite[Figure 2]{Metger-arXiv22}) demonstrated that the (maximal) CHSH violation alone can only certify about 0.6-bit of blind randomness in the asymptotic limit.

Several questions naturally follow. For example, could we improve the certifiable randomness for each type if we stay within the simplest Bell scenario? For global randomness, the maximal amount of 2-bit is known~\cite{Wooltorton22} to be certifiable using a family of self-testing~\cite{Mayers98, Supic2020} quantum correlations, but not from the CHSH violation alone. After all, these self-testing correlations do not maximally violate the CHSH Bell inequality but rather a different family of Bell inequalities~\cite{Wooltorton22}. 
Indeed, more randomness can generally be certified from the same data if we use the full correlation~\cite{Nieto-Silleras:2014aa,Bancal:2014aa}  or consider several Bell estimators simultaneously~\cite{Nieto-Silleras:2018aa}. Alternatively, one can also hope to gain a better certification by imposing further constraints in addition to the CHSH Bell value. The intuition here is that if the additional constraints do not, {\em a priori}, exclude a self-testing correlation, then it may improve the certifiable randomness by restricting the possible eavesdropping strategies of the adversaries.

Coming back to Bell experiments, we remind that no-signaling conditions~\cite{PR94,BLM+05} have to be enforced for any meaningful DI randomness generation. Mathematically, these conditions define a set of correlations called the ``no-signaling'' ($\NS$) polytope. Interestingly, even though $\NS$ differs from the quantum ($\Q$) set of correlations, they share~\cite{Goh2018} some nontrivial, common boundaries, which can be characterized by the number of zeros appearing in the correlation vector and their relative positions~\cite[Table V]{CTJ+22}.
An example of such a correlation is the one exhibiting the well-known Hardy paradox~\cite{Hardy1993}. 
In general, the correlation that violates the CHSH Bell-inequality maximally within each class is even known to exhibit robust self-testing~\cite[Table VI]{CTJ+22}. Given that certain quantum correlations lying on these common boundaries are~\cite{KA:IEEE:2020} useful for the task of randomness amplification~\cite{CR:NatPhys:2012}, one may also wonder whether they can similarly provide an advantage for the task of DI randomness generation.

In this work, we study (1) the amount of DI randomness that can be extracted for a given CHSH value when the various zero-probability constraints of~\cite{CTJ+22} are incorporated, and (2) the amount of DI randomness extractable when we employ the CHSH-maximizing quantum strategy within each class.
For the former, we evaluate randomness in the asymptotic limit as von Neumann entropy through the Brown-Fawzi-Fawzi (BFF21) method~\cite{Brown21} whereas for the latter, we work in finite regimes as smooth min-entropy by applying the generalized entropy accumulation theorem (GEAT)~\cite{Metger22Oct}.
The dual variables in the semidefinite program (SDP) for the asymptotic rate computation with the BFF21 method can be used to construct the min-tradeoff function, a necessary component in any approach that involves entropy accumulation. In standard DI quantum key distribution or a DI randomness expansion protocol, testing rounds are used to estimate the Bell value and this serves as the spot-checking method embedded in the protocol. Here, we follow~\cite{Liu21} to convert the min-tradeoff function into a ``crossover'' min-tradeoff function~\cite{Dupuis18} that accounts for entropy that can be accumulated from both testing rounds and non-testing rounds. 
By meticulously evaluating the required quantities of the min-tradeoff function, we can establish bounds on the finite rate for various types of randomness.


\section{Preliminaries}
\label{sec:Preliminaries}

\subsection{Security definition}
\label{subsec:Security}

Two parameters are relevant for describing the security of a randomness expansion protocol: soundness and completeness.

A protocol is said to be $\epsound$-sound if the final state $\rho_{KE}=\sum_{k\in\cK} \proj{k}\otimes\rho^k_E$ of the output system $K$ and the adversary's side information $E$ (including both quantum and classical parts) follows
\begin{equation}
\Pr[\mathrm{NonAbort}]\frac{1}{2}\|\rho_{KE|\mathrm{NonAbort}}-\mu_K\otimes\rho_E\|_1 \le \epsound,
\label{eqn:soundness_def}
\end{equation}    
where $\Pr[\mathrm{NonAbort}]$ is the probability of the protocol is not aborted after checking the termination criteria, $\rho_{KE|\mathrm{NonAbort}}$ is the normalized state conditioned on the non-aborting event, $\mu_K=\frac{1}{|\cK|}\id$ is the maximally mixed state on $\Hilbert_K$, and $\|\sigma\|_1=\Tr\sqrt{\sigma^\dagger \sigma}$ is the Schatten 1-norm.

A protocol is said to have $\epscomp$-complete if, in the honest implementation, the aborting probability is bounded
\begin{equation}
\Pr[\mathrm{Abort}|\mathrm{Honest}] \le \epscomp.
\label{eqn:completeness_def}
\end{equation}

\subsection{Entropic quantities}
\label{sebsec:entropic quantities}
We define the different entropic quantities that are useful to quantify the amount of randomness extracted from the source $K$ against given side information $E$.

The first entropic quantity is conditional von Neumann entropy, which describes the averaged extractable randomness or in other words the extractable randomness in the asymptotic limit. For a bipartite state $\rho_{KE}$ of the source $K$ and side information $E$ on $\Hilbert_K\otimes\Hilbert_E$, the conditional von Neumann entropy $H(K|E)_{\rho}$ on $\rho_{KE}$ is defined as
$$H(K|E)_{\rho}:=H(\rho_{KE})-H(\rho_E).$$

The other entropic quantity we consider in our work is the smooth min-entropy $\Hmineps(K|E)_{\rho}$.
Before defining this useful quantity, let's first introduce another meaningful quantity, min-entropy $\Hmin(K|E)_{\rho}$, that quantifies the extractable randomness in the worst case.
$$\Hmin(K|E)_{\rho}:=\sup_{\sigma_E} \sup\set{\lambda\in\R \mid \rho_{KE}\le e^{-\lambda}\id_K\otimes\sigma_E},$$
for any sub-normalized state $\rho_{KE}	\preceq 1$. Especially, when $\rho_{KE}$ is a classical-quantum (cq) state, i.e., $\rho_{KE}=\sum_{k\in\cK} \proj{k}_K\otimes\rho^k_E$, with the classical variable $K\in\cK$, min-entropy can be interpreted in terms of guessing probability
$$\Hmin(K|E)_\rho =-\log \pguess(K|E)_\rho,$$
where the guessing probability is defined as~\cite{Tomamichel17, Brown19}
$$\pguess(K|E)_\rho = \sup_{E^k} \sum_{k\in\cK} \Pr(K=k)_\rho \Tr[E^k\rho_{E|K=k}],$$
where $\Pr(K=k)_\rho=\Tr[\rho^k_E]$ is the probability of measuring $k$ on the system $K$, $\rho_{E|K=k}$ is the normalized version of $\rho^k_E$, and $\set{E^k}_k$ is a set of positive operator-valued measures (POVMs) acting on system $E$ and $\rho_{E|K=k}$ is the normalized post-measurement state after performing the measurement on $K$ and getting outcome $k$. 

Associated with the min-entropy definition, the smooth min-entropy can be defined as~\footnote{This definition follows Tomamichel's book~\cite{Tomamichel15}; however, other papers may use a different distance metric, e.g., in~\cite{Renner05}, they use trace distance $\Delta(\rho, \sigma)=\frac{1}{2}\|\rho-\sigma\|_1 + \frac{1}{2}|\Tr(\rho-\sigma)|$ to define the $\epsilon$-ball, which is also fine, since we can bound the radius of the ball with different metrics by the inequality $\Delta(\rho,\sigma)\le D(\rho,\sigma) \le \sqrt{2\Delta(\rho,\sigma)}$}
$$\Hmineps(K|E)_\rho := \max_{\tilde{\rho}_{KE}\in\epsball(\rho_{KE})} \Hmin(K|E)_{\tilde{\rho}},$$
where $\epsball(\rho_{KE}) = \set{\tilde{\rho}_{KE}\mid D(\tilde{\rho}_{KE}, \rho_{KE})\le \epsilon}$, and $D(\rho,\sigma) = \sqrt{1-\left( \Tr|\sqrt{\rho}\sqrt{\sigma}| + \sqrt{(1-\Tr\rho)(1-\Tr\sigma)} \right)^2}$ is the purified distance. Note that by taking the limit $\epsilon\to 0$, we have $\lim_{\epsilon\to 0}\Hmineps(K|E)_{\rho} = \Hmin(K|E)_{\rho}$.
The $\epsilon$ parameter allows us to inspect the states around the state $\rho_{KE}$ we assumed or observed. Smooth min-entropy simply takes the best state that gives the largest worst-case randomness.

To design a practical DI randomness expansion protocol, it is necessary to confine our attention to a finite number of variables. Consider a protocol generating $n$ outputs sequentially $K^n=K_1K_2...K_n$. The associated side information becomes $E_n=I^n E'_n$, where $I^n=I_1I_2...I_n$ and $I_k$ refers to the classical side information generated in the $k$-th round, and $E'_n$ is the quantum side information after $n$ rounds.
The amount of randomness stemming from this $n$-round sequential protocol can be quantified through the smooth min-entropy, $\Hmineps(K^n|I^n E'_n){\rho_{K^nI^nE'_n}}$.

\subsection{Types of randomness}
\label{subsec:Types of randomness}

Let $\cG$ be a bipartite nonlocal game defined by the input sets $\cX\times\cY$ with $\cX=\{X\}$, $\cY=\{Y\}$, and the outcome sets $\cA\times\cB$, with $\cA=\{A\}$, $\cB=\{B\}$, all linked to a winning condition function $\omega:\cX\times\cY\times\cA\times\cB \mapsto \R$. We explore three distinct types of randomness. These types are classified by the target system $K$ where the randomness is extracted from, and the adversary's quantum side information $E'$ and classical side information $I$. The amount of randomness can be described by the entropic quantity in the following form~\footnote{We write in the form of the von Neumann entropy, however, this can be adapted to other kinds of entropic quantities with the same target system and conditional systems.}
$$H(K|IE')_{\rho_{KIE'}}.$$

\textbf{Local randomness} is a type of randomness when only one party's outcome is considered as the source for randomness generation, e.g., $K=A$, and the classical side information is $I=XY$. The relevant entropic quantity for local randomness is $H(A|XYE')$.

\textbf{Global randomness} is a type of randomness when considering randomness extracted from both parties' outcomes $K=AB$. With the same classical side information specified as in the local randomness case, i.e., $I=XY$, the relevant entropic quantity for global randomness is $H(AB|XYE')$.

The aforementioned types of randomness are the standard DI randomness, restricting the adversary from directly participating in the nonlocal game. That is the adversary is prohibited from directly accessing outputs from any of the parties. The randomness under this scenario has been proposed and discussed for a long time~\cite{Colbeck06, Pironio10, Colbeck11}.

Apart from the standard types of randomness, another type of randomness called \textbf{Blind randomness} was proposed~\cite{Miller17Jun, Fu18} to consider a different scenario where the adversary is allowed to be part of the parties in the game, i.e., has the capability to either generate or acquire the outcome from one or more than one of the parties. The randomness is computed against other parties' side information. For instance, in the CHSH game, if we consider extracting randomness from Alice's outcome $K=A$, then, Bob's outcome $B$ is assumed to be part of the adversary's classical side information $I=XYB$, and the quantum side information $E'$ also extends to the part of the bipartite state that is used to generate Bob's outcome $B$. In this case, the relevant entropic quantity is $H(A|BXYE')$. Essentially, this scenario introduces the idea of mistrust between the parties.

\subsection{Extractor}
\label{subsec:Extractor}

To complete the randomness expansion protocol, an indispensable tool is required to extract randomness from a classical source that may be correlated with quantum side information, i.e., the \textit{quantum-proof strong extractor}~\cite{De12_Trevisan}.

\begin{definition}~\cite[Lemma 3.5]{De12_Trevisan}
\label{def:quantum-proof extractor}
For any input source $X\in\cX$ in the form of a cq state $\rho_{XE}$ with quantum side information $E$ satisfying $\Hmineps(X|E)_{\rho}\ge \kext$, after applying a quantum-proof $(\kext, \epsext)$-strong extractor, $\EXT:\cX\times\cY\to\cZ$, with seed $Y\in\cY$, the extractor generates output $Z\in\cZ$. The state $\rho_{ZYE}$ satisfies
\begin{equation}
\frac{1}{2} \|\rho_{ZYE} - \mu_Z\otimes\mu_Y\otimes\rho_E\|_1 \le \epsext+2\epsilon,
\end{equation}
where $\mu_Z$ ($\mu_Y$) is the maximally mixed state on the Hilbert space $\Hilbert_Z$ ($\Hilbert_Y$).
\end{definition}

There are multiple ways to construct a quantum-proof strong extractor, a well-known construction is using \textit{two-universal hash functions} or \textit{$\deltah$-almost two-universal hash functions}~\cite{Tomamichel10}, while another construction called \textit{Trevisan's extractor}~\cite{De12_Trevisan} provides a way to build an extractor consuming a shorter seed. For completeness, we describe the quantum-proof extractor with $\deltah$-almost two-universal hash functions as an example in the Appendix~\ref{appendix:Quantum-proof extractor}.


\section{Protocol}
\label{subsec:Protocol}

The nonlocal game corresponding to the correlations from one of the no-signaling boundary (NSB) classes~\cite{CTJ+22} can be constructed by the CHSH game with zero-probability constraints. To this end, consider the CHSH nonlocal game $\cG$ with the inputs $x,y\in\set{0,1}$ chosen according the uniform probability distribution $P(x,y)=\frac{1}{4}$ for all $x,y\in\set{0,1}$, and the outputs $a,b\in\set{0,1}$. The winning condition function is defined as
\begin{equation}
    \omega(x,y,a,b)=xy\oplus a \oplus b=\begin{cases}
        1,\ &\text{win},\\
        0,\ &\text{loose}.
    \end{cases}
\end{equation}
The classes of zero-probability constrained quantum correlations~\cite{CTJ+22} are defined as

\begin{equation}
\begin{aligned}
\Q_{3a} = \set{&\vec{P}\in\Q \mid \\
&P(0,0|0,0) = P(1,1|1,0) = P(1,1|0,1) = 0},\\
\Q_{3b} = \set{&\vec{P}\in\Q \mid \\
&P(0,0|0,0) = P(1,1|0,0) = P(1,0|1,1) = 0},\\
\Q_{2a} = \set{&\vec{P}\in\Q \mid P(0,0|0,0) = P(1,1|0,0) = 0},\\
\Q_{2b} = \set{&\vec{P}\in\Q \mid P(0,0|0,0) = P(1,1|1,0) = 0},\\
\Q_{2c} = \set{&\vec{P}\in\Q \mid P(0,0|0,0) = P(1,0|1,1) = 0},\\
\Q_{1} = \set{&\vec{P}\in \Q \mid P(0,0|0,0) = 0}.
\end{aligned}
\label{eqn:zpc_classes}
\end{equation}

We herein also define the sets of input-output tuples for zero-probability constraints 
\begin{equation}
    \cS_{\kappa} = \set{(a,b,x,y)|\vec{P}\in\Q_{\kappa}, P(a,b|x,y)=0},
\end{equation}
where $\kappa$ denotes the class of the zero-probability constrained quantum correlations. For example, for class 2a, the set is $\cS_{2a}=\set{(0,0,0,0), (1,1,0,0)}$; for class 3b, the set is $\cS_{3b}=\set{(0,0,0,0), (1,1,0,0), (1,0,1,1)}$.

In practice, it is impossible to achieve zero-probability constraints due to noise and device imperfections, so we consider the relaxation of these constraints with a small tolerable error $\ztol$, i.e., $\forall\vec{P}\in\widetilde{\Q}^{\ztol}_{\kappa}$, $P(a,b|x,y)\le \ztol$ for all $(a,b,x,y)\in\cS_{\kappa}$, where $\wtQ^{\ztol}_{\kappa}$ is the relaxed version of $\Q_\kappa$.

We state the steps of DI randomness generation protocol for three types of randomness with chosen class $\kappa$ in Protocol~\ref{protocol:DI randomness generation}. Among these classes, it turns out that the best protocols for both local and global randomness in terms of the maximal asymptotic rate are given by $\kappa=\text{2a}$, while $\kappa=\text{3b}$ provides the optimal asymptotic rate for blind randomness.

To demonstrate the realization of Protocol~\ref{protocol:DI randomness generation}, we describe the quantum strategy that maximizes the CHSH winning probability for class 2a and class 3b as follows:
\begin{equation}
\begin{aligned}
\ket{\Psi} &= \cos\theta\ket{01} + \sin\theta\ket{10}, \\
A_0 &= B_0 = \sigma_z,\\
A_1 &= \cos 2\alpha \sigma_z - \sin 2\alpha \sigma_x, \\
B_1 &= \cos 2\beta \sigma_z - \sin 2\beta \sigma_x,
\end{aligned}
\end{equation}
where the parameters $\theta = \frac{\pi}{4}, \alpha = -\frac{5\pi}{6}, \beta = \frac{\pi}{6}$ for class 2a; $\alpha = \beta \approx 0.6354$, $\theta = \tan^{-1}(\tan\alpha \tan\beta)$ for class 3b (see Table~\ref{tab:QuantumRealize} or \cite[Section III]{CTJ+22}) for the realization of all the classes.)

\begin{figure}
\begin{mdframed}
\begin{protocol}{\bf DI randomness generation protocol with \\zero-probability constraints.}
\label{protocol:DI randomness generation}
\end{protocol}

\begin{enumerate}[leftmargin=2em]
    \item Alice and Bob decide which type of randomness they want to generate
    $$\chi=\begin{cases}
        \text{0, for local randomness,} \\
        \text{1, for global randomness,} \\
        \text{2, for blind randomness.}
    \end{cases}$$
    They also decide which class of zero-probability constraints, $\kappa$, they want to apply.
    \item For each round $i\in\{1,2,...,n\}$
    \begin{enumerate}[leftmargin=10pt]
        \item Alice chooses $T_i\in\{0,1\}$ with probability $P(T_i=1)=\gamma$. For $T_i=1$, Alice selects $X_i=x, Y_i=y$ according to the probability distribution $P(x,y)=\frac{1}{4}$, for all $x,y\in\set{0,1}$; if $T_i=0$, Alice sets $X_i=x^*, Y_i=y^*$.
        \item Alice inputs $X_i$ into her device and sends $Y_i$ to Bob. Alice computes her output $A_i$ and \begin{enumerate}
            \item if $\chi=2$, she requests Bob to return his output $B_i$ all the time.
            \item otherwise, she only asks Bob to return when $T_i=1$.
        \end{enumerate}
        \item If $T_i=1$, Alice computes $C_{\omega,i}=\omega(X_i, Y_i, A_i, B_i)$ and sets $C_{z,(a',b',x',y'),i}=\delta_{a',A_i}\delta_{b',B_i}\delta_{x',X_i}\delta_{y',Y_i}$
        for all $(a',b',x',y')\in\cS_{\kappa}$; if $T_i=0$, Alice sets $C_{\omega,i}=C_{z,(a',b',x',y'),i}=\perp$ for all $(a',b',x',y')\in\cS_{\kappa}$.
    \end{enumerate}
    \item After $n$ rounds, Alice checks $\abs{\set{i:C_{\omega,i}=1}} \ge (\wexp-\wtol)\gamma n$ and $\abs{\set{i:C_{z,(a', b',x',y'),i}}} \le \ztol\gamma n$ for all $(a',b',x',y')\in \cS_{\kappa}$. She aborts if any one of the checks is not satisfied.
    \item If the protocol is not aborted, Alice defines the target source $K_i$ and public information $I_i$ in each round based on $\chi$
    \begin{enumerate}
        \item if $\chi=0$, she sets $K_i=A_i$ and $I_i=X_iY_i$.
        \item if $\chi=1$, she sets $K_i=A_iB_i$ and $I_i=X_iY_i$.
        \item if $\chi=2$, she sets $K_i=A_i$ and $I_i=X_iY_iB_i$.
    \end{enumerate}
    Then, Alice computes $\kext$ such that $\Hmineps(K^n|I^nE')\ge \kext$, and applies a quantum-proof $(\kext, \epsext)$-strong randomness extractor to extract $\floor{\kext-\DeltaExt}$ bits from $\setof{K_i}{i,T_i=0}$.
\end{enumerate}
\end{mdframed}
\end{figure}


\section{Finite Analysis on Randomness}
\label{sec:Results}

The asymptotic rates for all classes are calculated using the BFF21 method~\cite{Brown21} (see also~\cref{appendix:Asymptotic rate}) with uniform input probability, i.e., $P(x,y)=\frac{1}{4}$ for all $x,y\in\set{0,1}$, tolerable winning probability deviation $\wtol=2\times 10^{-5}$,\footnote{That is even when $\wexp$ is set at the quantum bound, we still allow the winning probability in the range $[\wexp-\wtol, \wexp]$.} and tolerable error for zero probability $\ztol=10^{-10}$. In Table~\ref{tab:asym_rate_cls}, we provide the maximum asymptotic rates based on the maximal quantum-achievable winning probability for each class~\footnote{Note that the values listed in Table~\ref{tab:asym_rate_cls} are not very tight because of the tradeoff taken between numerical precision and the feasibility of the solver for the whole class. In particular, we choose the Mosek solver parameter that describes the allowable gap between the primal and dual solutions equal to $10^{-6}$. For CHSH, the local value is analytically given by~\cite{Pironio10} as 1.} (see Table~\ref{tab:QuantumRealize} for the quantum-achievable winning probability of each class.)

\begin{table}
    \centering
    \renewcommand{\arraystretch}{1.2}%
    \vspace{.5em}
    \begin{tabular}{cccc}
    \hline
     Class & Local & Global & Blind \\
    \hline\hline
     CHSH & 0.9981 & 1.5816 & 0.5823 \\
     1  & 0.9958 & 1.6770 & 0.7381 \\
     2a & 0.9992 & 1.7964 & 0.7962 \\
     2b & 0.9450 & 1.5963 & 0.7914 \\
     2b\textsubscript{swap} & 0.9977 & 1.5963 & 0.6501 \\
     2c & 0.9329 & 1.7028 & 0.7806 \\
     3a & 0.9497 & 1.5448 & 0.7444 \\
     3b & 0.9777 & 1.6852 & 0.9238 \\
     \hline
    \end{tabular}
    
    \caption{Maximal asymptotic rate (conditional von Neumann entropy, $H(K|E)$) of all types of randomness for each class with protocol parameters: $\wtol=2\times 10^{-5}$, and $\ztol=10^{-10}$ with eighteen-term Gauss-Radau quadrature. Note that 2b\textsubscript{swap} is the party-swapped version of class 2b.}
    \label{tab:asym_rate_cls}
\end{table}

The curves of the asymptotic DI randomness rate as a function of the CHSH winning probability for all classes are shown in \cref{fig:asymp_all_rand}.

\begin{figure}
\centering
\subfigure[Local randomness]{\includegraphics[width=.9\linewidth]{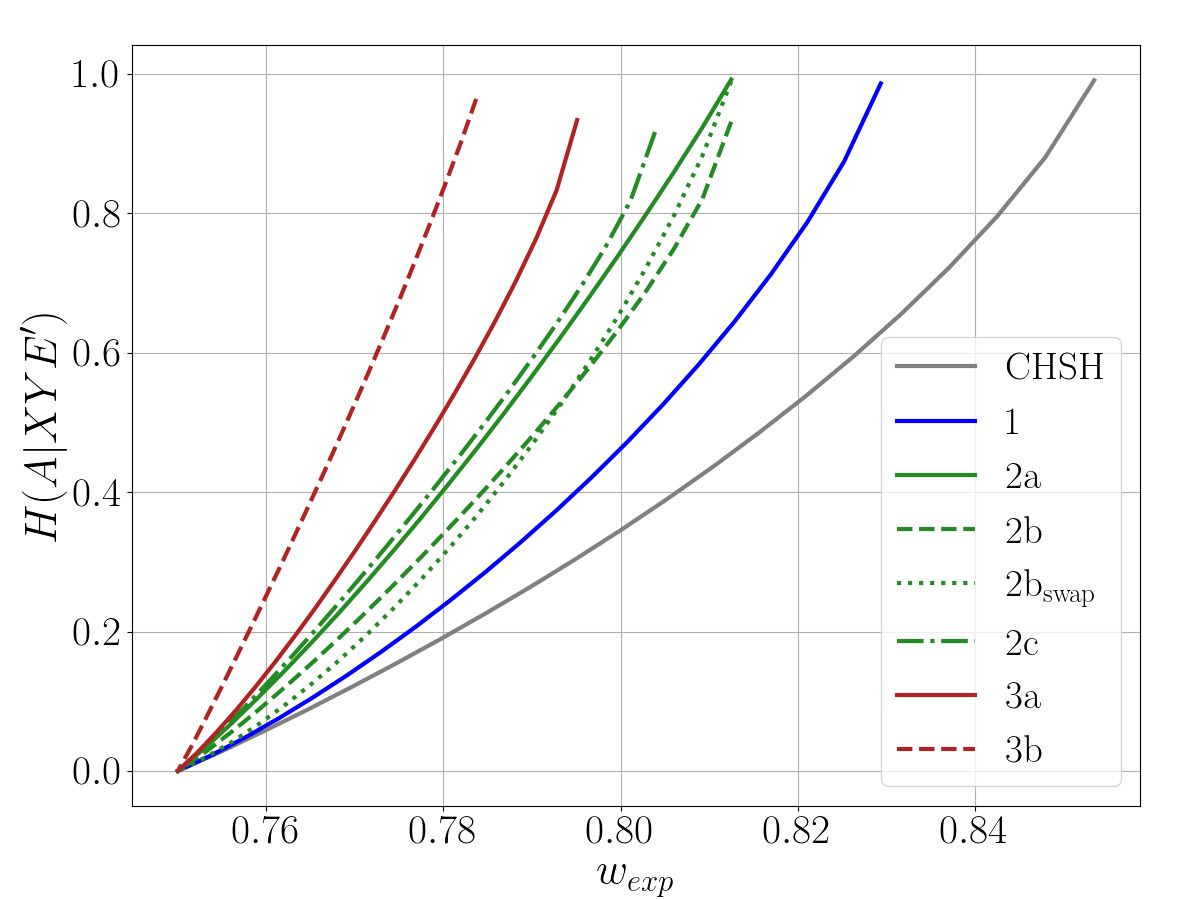}
 \label{fig:single_asymp}}\\
\subfigure[Global randomness]{\includegraphics[width=.9\linewidth]{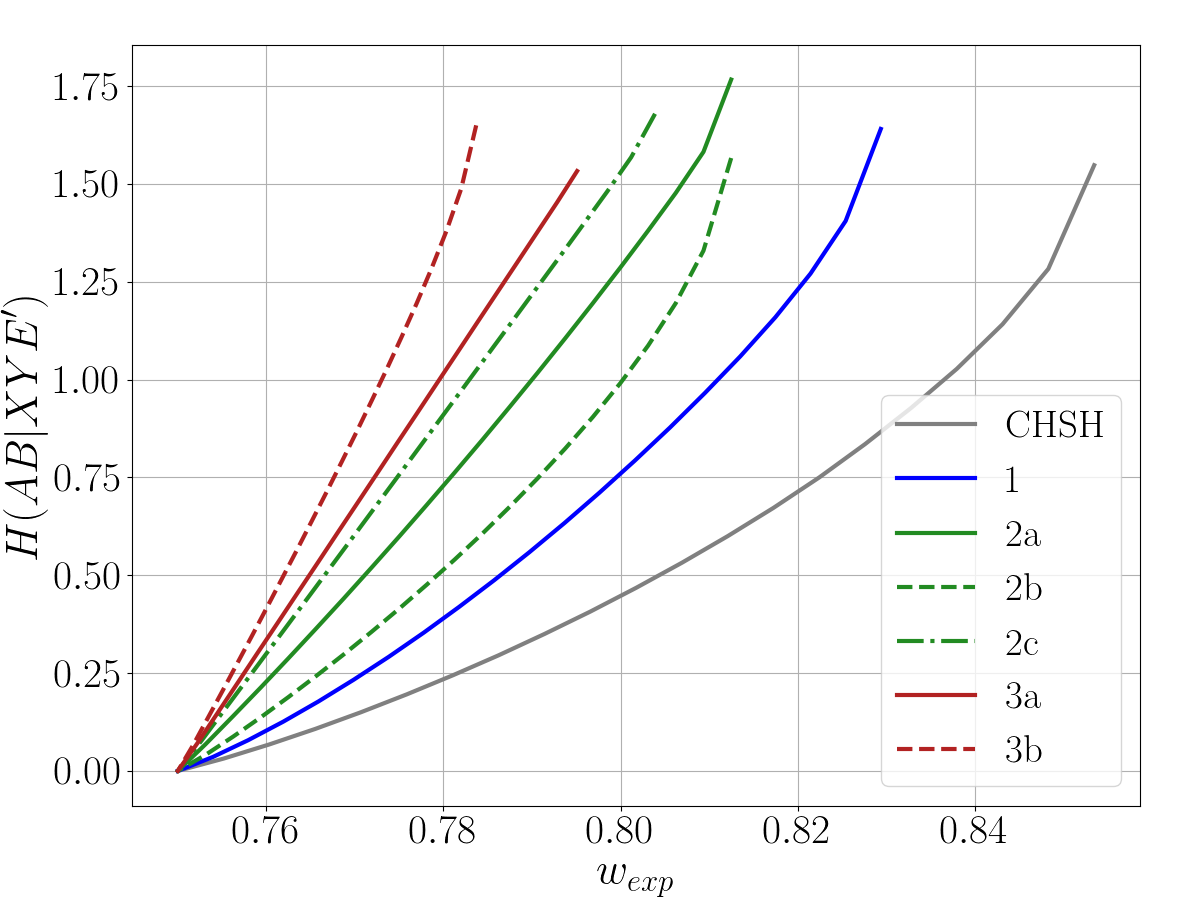}
\label{fig:global_asymp}}\\
\subfigure[Blind randomness]{\includegraphics[width=.9\linewidth]{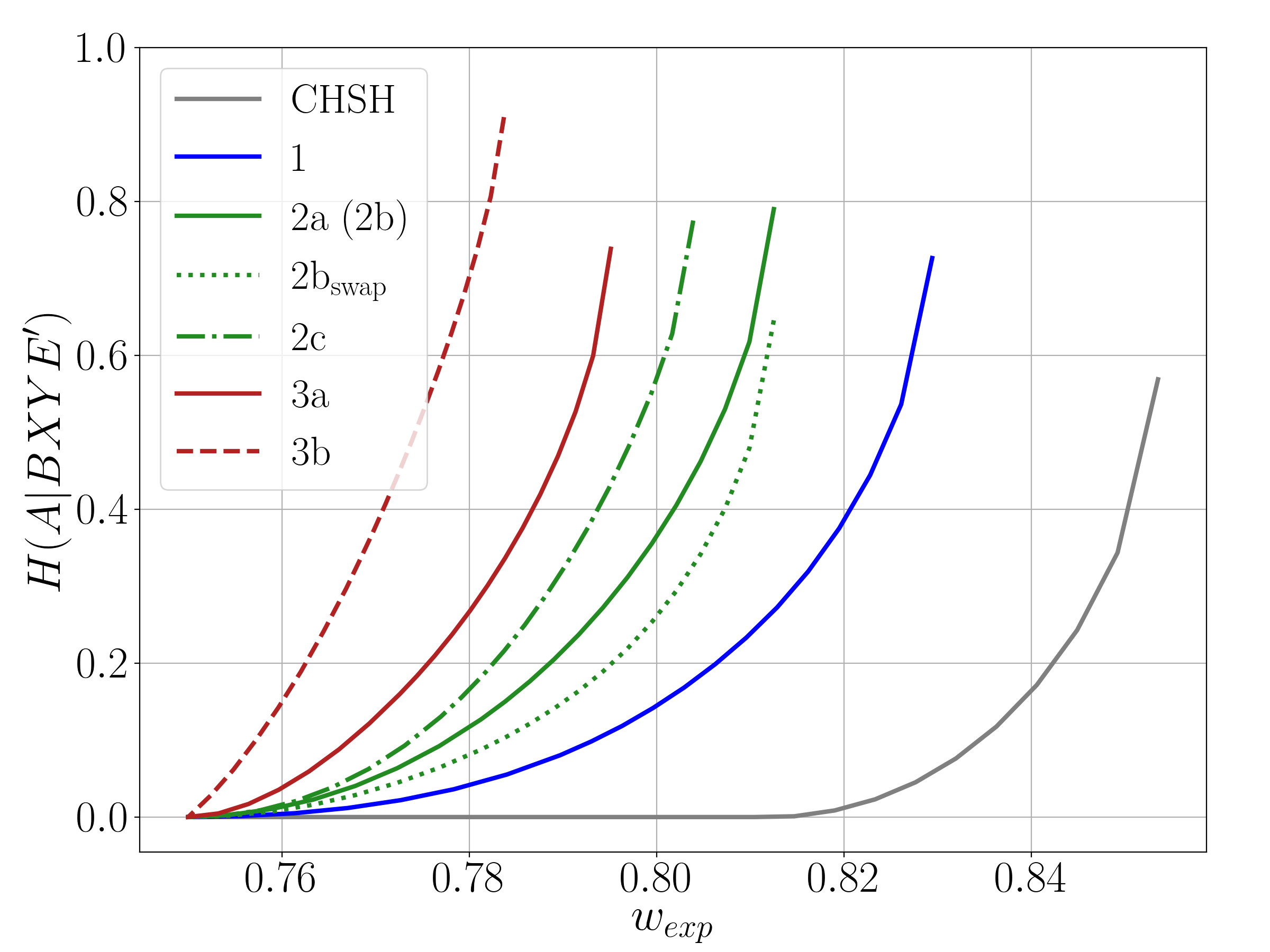}
\label{fig:blind_asymp}}
\caption{The asymptotic rate (conditional von Neumann entropy, $H(K|E)$) of (a) local (b) global and (c) blind randomness with optimal choice of the inputs over CHSH winning probability. The rates are computed in NPA level 2 plus $ABZ^{(\dagger)}, ABZ^\dagger Z, ABZZ^\dagger$ types of moments with Gauss-Radau quadrature containing $m=12$ terms.}
\label{fig:asymp_all_rand}
\end{figure}

To determine the finite rate of randomness, we utilize the GEAT (see Appendix~\ref{appendix:Entropy accumulation}.) Essentially, the GEAT establishes a lower bound on the smooth min-entropy for an $n$-round scenario. The bound on the smooth min-entropy is obtained by subtracting round-dependent correction terms from the worst-case von Neumann entropy. These correction terms rely on specific properties of the min-tradeoff function and the security parameters.

In Appendix~\ref{appendix:Construction of min-tradeoff function}, we systematically outline the construction of the min-tradeoff function, which is derived from the Lagrange dual function of the semidefinite program (SDP) for conditional von Neumann entropy (asymptotic rate) computation.

For a spot-checking embedded protocol, the min-tradeoff function needs to include results from both testing and non-testing rounds. This kind of min-tradeoff function called \textit{crossover min-tradeoff function} can be constructed from the original min-tradeoff function that only contains the testing rounds. The transformation of the min-tradeoff function into its crossover version is demonstrated in Appendix~\ref{appendix:Cross-over min-tradeoff function}.

With all the required ingredients in place, we can numerically derive the lower bounds on the smooth min-entropy for various types of randomness. Taking the input randomness consumption into account, we illustrate the finite rate of Protocol~\ref{protocol:DI randomness generation} in Fig.~\ref{fig:finite_rate} with the formula stated in Theorem~\ref{thm:fin_rate_prot1}.

\begin{theorem}[Security and finite rate of Protocol~\ref{protocol:DI randomness generation}]
\label{thm:fin_rate_prot1}
Given $\wexp\in[w_C, w_Q]$, $\gamma\in(0,1]$, $\nu'\in[0,1]$, and $\epsilon, \wtol, \ztol, \beta\in(0,1)$, Protocol~\ref{protocol:DI randomness generation} is $\epsound$-sound and $\epscomp$-complete, where $\epsound = \epsext+2\epsilon$ and $\epscomp = 1 - (1 - e^{-2\wtol^2 n})(1 - e^{-2\ztol'^2 n})^{n_\mathrm{zero}}$, $n_\mathrm{zero}$ is the number of zero-probability constraints depending on the choice of the classes $\kappa$. The finite rate is given by
\begin{multline}
r(n, \wexp, \gamma, \epsilon, \wtol, \ztol, \lambda, \beta, \nu') = \\
h(\wexp-\wtol) - \Delta - \DeltaExt - \DeltaInp,
\label{eqn:fin_rate}
\end{multline}
where $\Delta$ is finite-rate correction term, $\DeltaExt$ is the extractor entropy loss, and $\DeltaInp$ is the randomness consumption to generate inputs. $\Delta$ is given below
\begin{equation}
\begin{aligned}
\Delta =& \frac{\ln 2}{2}\beta \left[\log 9 + V\right]^2 \\
&- \frac{1}{n}\bigl[ \frac{1+\beta}{\beta} \log(1-\sqrt{1-\epsilon^2}) + \frac{1+2\beta}{\beta}\log\Pr[\Omega]\bigr] \\
&- \frac{1}{6\ln 2} \frac{\beta^2}{(1-\beta)^3} \zeta^\beta \ln^3(\zeta+e^2),
\end{aligned}
\end{equation}
where $V=\sqrt{2+D-\min_{\nu\in[1-w_Q,w_Q]}\lambda^2(\nu-\nu_0)^2}$, $\gamma_0=\frac{1-\gamma}{\gamma}$,
$\nu_0=\frac{1}{2\gamma}-\gamma_0\nu'$, $D=\frac{\lambda^2}{4\gamma^2} - \frac{\gamma_0}{\gamma} \lambda^2(1-\nu')\nu'$, $\zeta=2^{2+\lambda (\gamma_0 (1-\nu') + w_Q)}$, $w_C$ and $w_Q$ are classical and quantum bound of the game.

Eq.\eqref{eqn:extractor_entropy_loss} gives the extractor entropy loss for the extractor built based on the $\deltah$-almost hashing.

The input randomness consumption is given
\begin{equation}
\DeltaInp=\gamma H_\mathrm{sh}(p) + H_\mathrm{sh}(\gamma),
\label{eqn:inp_rand_consump}
\end{equation}
where $\Hsh$ is Shannon entropy.
\end{theorem}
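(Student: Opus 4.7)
The plan is to establish three things separately: completeness, soundness, and the finite-rate expression, treating the protocol's two check conditions and the extractor step as essentially independent ingredients.

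For completeness, I would argue that in an honest implementation the testing round outcomes in each of the $n$ trials are i.i.d., both for the indicator of the winning event and for each of the $n_\mathrm{zero}$ zero-constraint violation indicators. Conditioned on $T_i=1$, an honest device achieves winning probability exactly $\wexp$ and zero-constraint violation probability bounded by $\ztol'$ (a slight relaxation of the ideal zero). A Hoeffding-type tail bound then gives $\Pr[\text{winning check fails}] \le e^{-2\wtol^2 n}$ and $\Pr[\text{$j$-th zero check fails}]\le e^{-2\ztol'^2 n}$ per constraint. Since the events for distinct constraints, in the honest product behavior, are independent given the testing schedule, the probability that \emph{none} of them fail is lower bounded by $(1-e^{-2\wtol^2 n})(1-e^{-2\ztol'^2 n})^{n_\mathrm{zero}}$, from which $\epscomp$ follows by complement.

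For soundness and the rate, the plan is to apply GEAT to the sequence of channels implementing Protocol~\ref{protocol:DI randomness generation} round by round, to obtain $\Hmineps(K^n\mid I^nE')_{\rho_{|\Omega}} \ge n\, h(\wexp-\wtol) - n\Delta$ on the event $\Omega=\{\mathrm{NonAbort}\}$. The required min-tradeoff function is constructed, following Appendix~\ref{appendix:Construction of min-tradeoff function}, from the Lagrange dual of the SDP that computes the asymptotic von Neumann rate at a given $(\wexp,\ztol)$. This testing-round min-tradeoff function is then turned into a crossover min-tradeoff function via the substitution familiar from~\cite{Liu21,Dupuis18}, parametrized by the auxiliary variables $\nu'$, $\nu_0=\tfrac{1}{2\gamma}-\gamma_0\nu'$, and slope $\lambda$. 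From this crossover function I would read off the GEAT input quantities: the infimum value (giving the $h(\wexp-\wtol)$ honest rate), the maximum/minimum $\Max,\Min$ (yielding the $\zeta$ appearing in the cubic correction), and the variance $\Var$ together with the $\log 9$ dimensional term (yielding $V$ in the quadratic correction). Plugging these into the GEAT correction terms produces exactly the three lines defining $\Delta$, with $\beta\in(0,1)$ the free GEAT optimization parameter and the $-\tfrac{1}{n}\log\Pr[\Omega]$ factor arising from the conditioning on $\Omega$.

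To convert the smooth min-entropy bound into a soundness statement, I would apply \cref{def:quantum-proof extractor}: feeding a $(\kext,\epsext)$-strong quantum-proof extractor the $n$-round source $K^n$ with side information $I^nE'$ yields an output $\epsext+2\epsilon$-close to uniform, which upon comparison with~\eqref{eqn:soundness_def} gives $\epsound = \epsext + 2\epsilon$. The number of extractable bits is $\lfloor\kext - \DeltaExt\rfloor$ after charging the extractor entropy loss $\DeltaExt$ given by \eqref{eqn:extractor_entropy_loss}. Finally, the net rate must subtract the randomness consumed to produce inputs: with probability $\gamma$ Alice samples a uniform test input of Shannon entropy $\Hsh(p)$, and the testing flag itself costs $\Hsh(\gamma)$, giving $\DeltaInp$ as in~\eqref{eqn:inp_rand_consump}. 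Combining the GEAT rate $h(\wexp-\wtol)-\Delta$, the extractor loss, and the input cost yields~\eqref{eqn:fin_rate}.

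The main obstacle is the explicit evaluation of $\Max,\Min,\Var$ for the crossover min-tradeoff function: one must verify that the dual construction yields a function whose Lipschitz and variance parameters admit closed-form bounds matching the stated $V$ and $\zeta$, and in particular that the minimization in $V$ can legitimately be restricted to $\nu\in[1-w_Q,w_Q]$. A secondary technical point is checking that the per-round channels in Protocol~\ref{protocol:DI randomness generation} satisfy the non-signaling / Markov conditions required for GEAT (rather than the older EAT); this is standard for spot-checking protocols of this form but should be spelled out explicitly.
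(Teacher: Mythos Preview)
Your proposal is correct and mirrors the paper's own argument almost exactly: the paper likewise splits the proof into (i) Hoeffding bounds for the winning check and each zero-probability check, combined via a product lower bound on $\Pr[\Omega_{|\mathrm{Hon}}]$, to get completeness; (ii) application of GEAT (Theorem~\ref{thm:geat_with_testing}) with the crossover min-tradeoff function of Lemmas~\ref{lem:cross_mtf}--\ref{lem:prop_cross_mtf} to obtain the smooth min-entropy bound and hence $\Delta$; and (iii) invocation of the quantum-proof extractor (Definition~\ref{def:quantum-proof extractor}) to conclude $\epsound=\epsext+2\epsilon$, together with the input-randomness accounting. The two technical points you flag as obstacles---the explicit $\Max/\MinSigamma/\VarSigamma$ computation and the verification of the non-signaling condition for the per-round channels---are precisely the content of Lemma~\ref{lem:prop_cross_mtf} and the short system-mapping paragraph in the paper's proof, so your plan is complete once those are filled in.
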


The derivation of the security and finite-rate correction term in Theorem~\ref{thm:fin_rate_prot1} can be found in Appendix~\ref{appendix:Protocol security and finite rate lower bound}.

To demonstrate the security of the Protocol~\ref{protocol:DI randomness generation}, we set the smoothness $\epsilon=10^{-12}$ and the extractor parameter $\epsext \le \epsilon\times 10^{-3}$. The soundness of Protocol~\ref{protocol:DI randomness generation} is $\epsound =\epsext+2\epsilon\approx 2\times 10^{-12}$. By choosing the tolerable winning probability deviation $\wtol=10^{-4}$ and tolerable error for zero probability $\ztol=10^{-3}$, the completeness of Protocol~\ref{protocol:DI randomness generation} is $\epscomp=0.8223$ for $n=10^7$.

Since the extractor entropy loss $\DeltaExt$ is independent of the number of rounds $n$, we neglect it in our computation for convenience.
The finite rates of global randomness for Protocol~\ref{protocol:DI randomness generation} are illustrated in Fig.~\ref{fig:finite_global}. The standard CHSH and the $\delta$-family Bell inequality~\footnote{The $\delta$-family Bell is in the following form, $\bracket{A_0 B_0} + \frac{1}{\sin\delta}(\bracket{A_0 B_1} + \bracket{A_1 B_0}) - \frac{1}{\cos2\delta}\bracket{A_1 B_1}$.}~\cite{Wooltorton22} are included for comparison. All the lines are computed with the tolerable deviation $\wtol=10^{-4}$, and $\delta=\pi/6$.
We choose two different zero-probability tolerance $\ztol=10^{-3},\ 10^{-9}$ to show the robustness of Protocol~\ref{protocol:DI randomness generation}.

\begin{figure}
\centering
\subfigure[Local randomness]{\includegraphics[width=.9\linewidth]{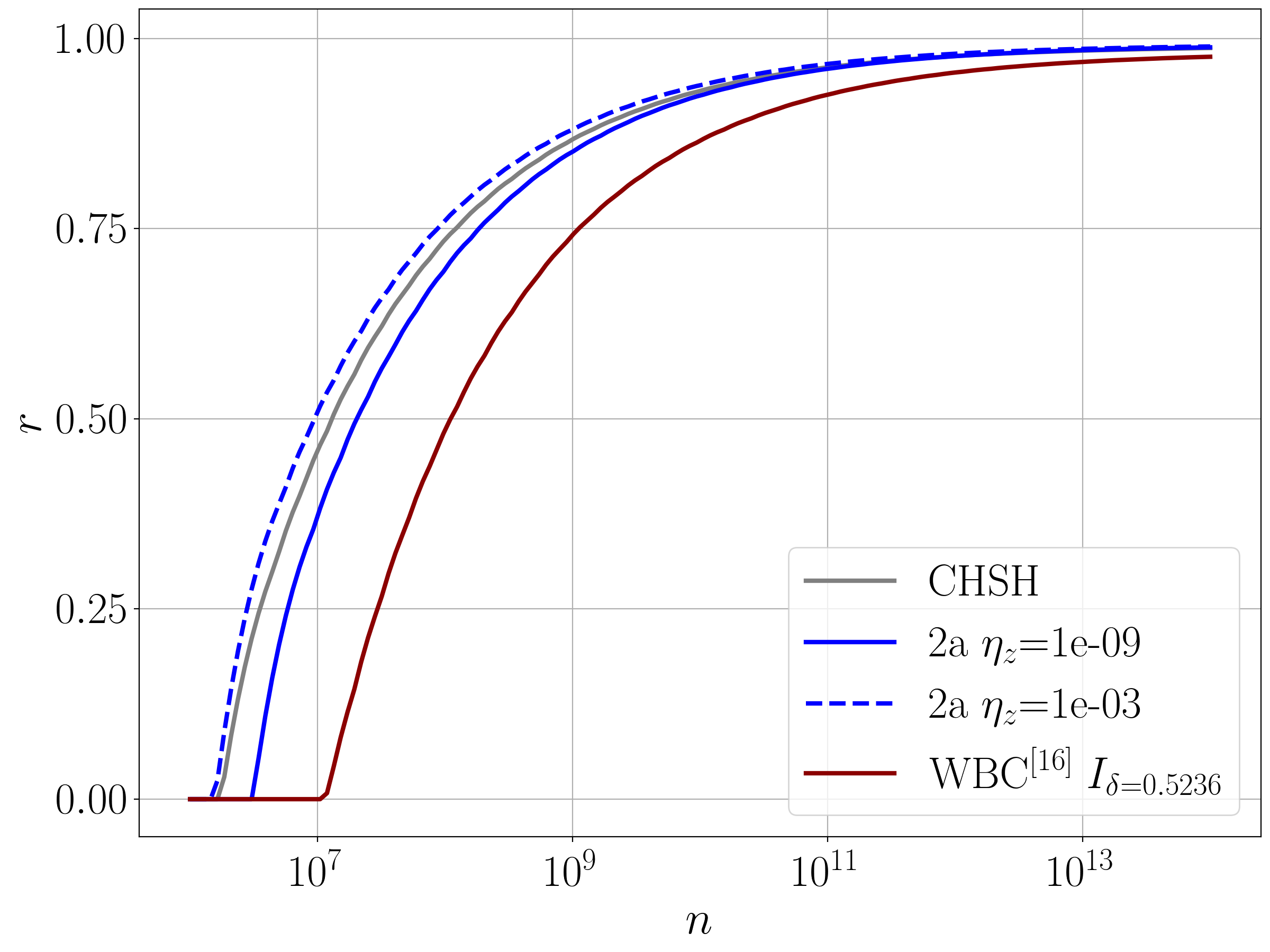}
\label{fig:finite_single}}\\
\subfigure[Global randomness]{\includegraphics[width=.9\linewidth]{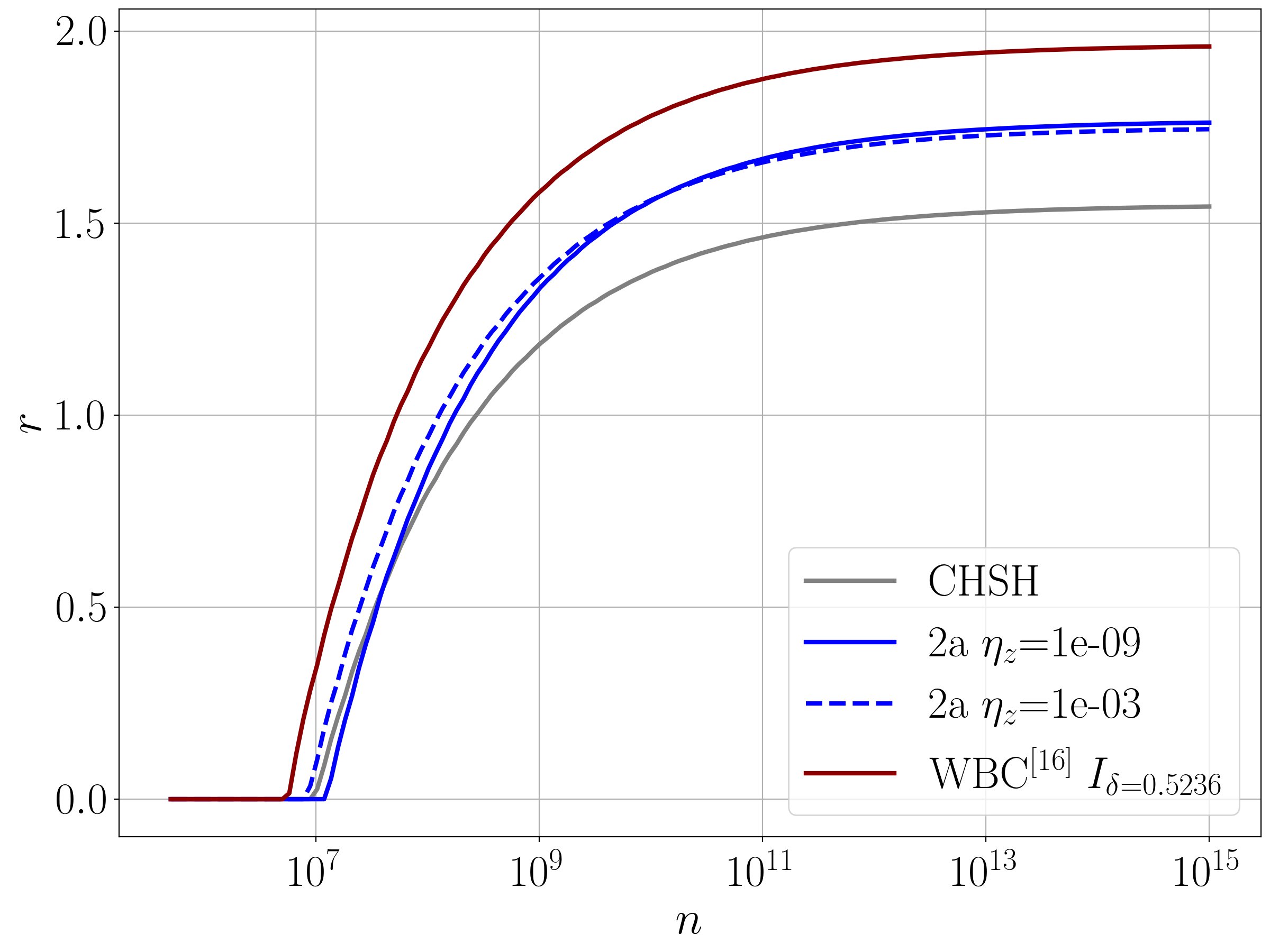}
\label{fig:finite_global}}\\
\subfigure[Blind randomness]{\includegraphics[width=.9\linewidth]{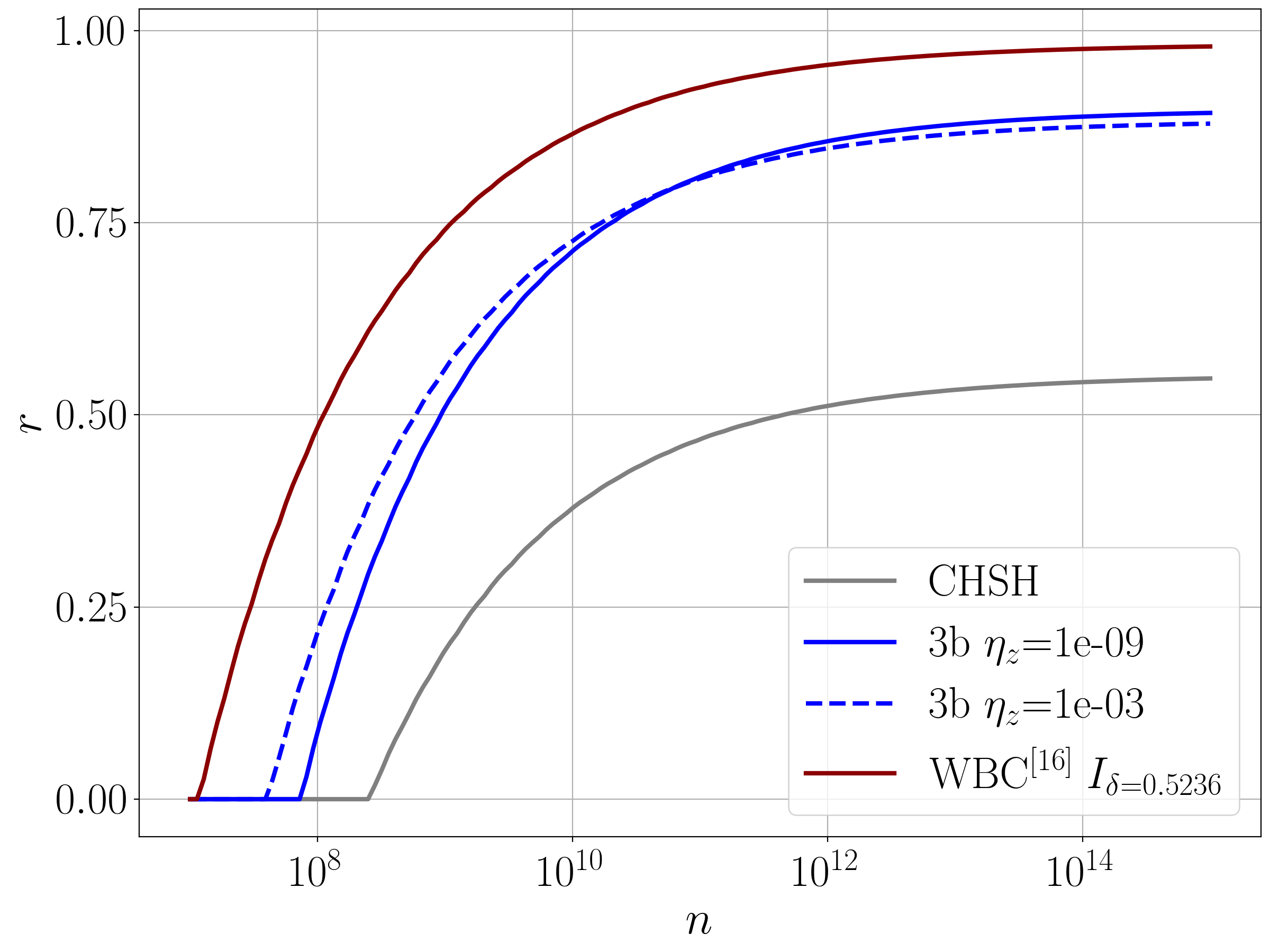}
\label{fig:finite_blind}}
\caption{Finite rates of (a) local (b) global and (c) blind randomness for Protocol~\ref{protocol:DI randomness generation} with zero-probability constraints, class 2a for (a) and (b), class 3b for (c), compared to the standard CHSH-game-based protocol and the one relying on WBC $I_\delta$ with $\delta=\pi/6$. Two different zero-probability tolerances $\ztol=10^{-3}, 10^{-9}$ and a fixed winning probability tolerance $\wtol=10^{-4}$ are considered.}
\label{fig:finite_rate}
\end{figure}

Other types of randomness and the finite analysis of other classes are shown in Appendix~\ref{appendix:Numerical results}.


\section{Concluding remarks}
\label{sec:Discussion}

The amount of DI randomness for a given CHSH score is generally not tight. In this work, we have shown how improvements in the asymptotic rates across all forms of randomness can be achieved for a wide range of CHSH scores by incorporating the zero-probability constraints classified in~\cite{CTJ+22}, see~\cref{fig:asymp_all_rand}. 
In particular, for the case of blind randomness, when relying solely on the CHSH score, the rate becomes zero when the score (winning probability) drops below 0.81. In contrast, for all the other protocols that incorporate these additional zero-probability constraints, the rate reaches zero only at the classical bound of 0.75. However, if one further optimizes the rate over the range of CHSH scores allowed, then an obvious advantage in the DI randomness generation rate persists solely for the blind and global randomness (see ~\cref{tab:asym_rate_cls}).

Next, we focus on the quantum strategy from each class that maximizes the DI randomness generation rate, via Protocol~\ref{protocol:DI randomness generation}, for all types of randomness. For completeness, we also compare the best (finite) rates achievable from these classes against those obtained from the quantum strategy maximizing the CHSH Bell-inequality violation and the $\delta$-family inequality of~\cite{Wooltorton22} with $\delta=\pi/6$. As expected from the asymptotic results, we see from~\cref{fig:finite_rate} a general advantage from a protocol with additional zero-probability constraints against the standard CHSH protocol for all three types of randomness. However, these advantages are, at the same time, inferior to those brought by the protocol~\cite{Wooltorton22} of employing the $I_{\delta}$ inequality. The only exception to this observation is for the single-party (local) randomness, where our protocol offers a slight improvement in the minimal rounds required to observe a nonzero generation rate.

Our results suggest that when there is a limitation on the number of rounds, it may be possible to explore different numbers of zero-probability constraints and various tolerable levels to attain a higher rate. However, to actually take advantage of our observation, it is clear that one should perform further analysis by taking into account also other imperfections, such as losses, that are, unfortunately, very common in photonic experiments.

\acknowledgments
We thank Stefan Bauml for helpful discussions on the entropy accumulation theorem. This work is partially supported by the National Science and Technology Council (NSTC, formerly Ministry of Science and Technology), Taiwan (Grants No. 109-2112-M006-010-MY3, 112-2119-M001-004, 112-2119-M001-006, 112-2628-M006-007-MY4).

\appendix


\section{Quantum-proof extractor based on almost two-universal hashing}
\label{appendix:Quantum-proof extractor}

\begin{definition}[$\deltah$-almost two-universal hashing~\cite{Hayashi16}]
A family $\cF:\cX\to\cY$ of hash functions is said to be \textit{$\deltah$-almost two-universal} if the following condition hold
\begin{equation}
\Pr_{f\in\cF}[f(x)=f(x')]\le\frac{\deltah}{|\cY|}\quad \forall\ x\neq x',
\end{equation}
where the probability is computed uniformly over the family $\cF$, $\deltah\ge 1$. Especially, when $\deltah=1$, the family $F$ is called \textit{two-universal} hash function family.
\end{definition}

\begin{lemma}[Quantum Leftover Hash Lemma~\cite{Tomamichel10}]\label{lem:quan_leftover}
Given any $\epsilon'>0$, a cq-state $\rho_{XE}=\sum_{x\in\cX}\proj{x}\otimes\rho^x_E$, and a $\deltah$-almost two-universal hash family $\cF:\cX\to\cZ$, where $|\cZ|=2^l$, if we apply a hash function on the classical register by uniformly choosing a function $f$ from the family $\cF$, i.e., $\rho_{FZE}=\frac{1}{|\cF|} \sum_{f\in\cF}\proj{f}_F \otimes \sum_{x\in\cX}\proj{f(x)}_Z \otimes \rho^x_E$, then we have~\footnote{In \cite{Tomamichel10}, the author uses $\frac{1}{2}\min_{\substack{\sigma_E\in\cS(\Hilbert_E) \\ \Tr\sigma_E=\Tr\rho_E}}\|\rho_{AE}-\mu_A\otimes\rho_E\|_1$ instead of the form in the LHS of Eq.\eqref{eqn:quan_leftover}. While in their proof, they first take an upper-bound on $\frac{1}{2}\min_{\substack{\sigma_E\in\cS(\Hilbert_E) \\ \Tr\sigma_E=\Tr\rho_E}}\|\rho_{AE}-\mu_A\otimes\rho_E\|_1$ as the form of the LHS of Eq.\eqref{eqn:quan_leftover}. Hence, the quantum leftover hash lemma is also compatible with this definition.}
\begin{multline}
\frac{1}{2} \|\rho_{FZE}-\mu_F\otimes\mu_Z\otimes\rho_E\|_1 \le \\
\frac{1}{2}\sqrt{\deltah-1+2^{l-\Hmin(X|E)+\log(2/\epsilon'^2+1)}}+\epsilon',
\label{eqn:quan_leftover}
\end{multline}
where $F$ is the system that records the choice of the hash function, and $\mu_F$ ($\mu_Z$) is the maximally mixed state on the Hilbert space $\Hilbert_F$ ($\Hilbert_Y$).
\end{lemma}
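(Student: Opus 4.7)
The plan is to adapt the Renner-style proof of the quantum leftover hash lemma to the $\deltah$-almost two-universal setting. The three key ingredients are (i) bounding the classical--quantum trace distance by a Hilbert--Schmidt norm weighted by a reference state $\sigma_E$, (ii) computing the resulting expected squared norm under a random hash choice $f\in\cF$, where the almost two-universality produces a collision factor $\deltah/|\cZ|$, and (iii) a smoothing step to trade collision entropy for min-entropy.

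First I would set $\tau_{FZE}:=\rho_{FZE}-\mu_F\otimes\mu_Z\otimes\rho_E$ and exploit the classicality of $F$ to write $\|\tau_{FZE}\|_1 = \mathbb{E}_{f\sim\cF}\|\tau_{ZE}^f\|_1$, where $\tau_{ZE}^f=\sum_z\proj{z}\otimes(\tilde\rho^{f,z}_E-\rho_E/|\cZ|)$ and $\tilde\rho^{f,z}_E=\sum_{x:f(x)=z}\rho^x_E$. For each $f$, using that the classical register $Z$ has dimension $|\cZ|$, the standard norm inequality yields
\[
\|\tau_{ZE}^f\|_1 \le \sqrt{|\cZ|}\,\bigl\|\sigma_E^{-1/4}\tau_{ZE}^f\sigma_E^{-1/4}\bigr\|_2
\]
for any $\sigma_E>0$ with $\Tr\sigma_E=1$. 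Averaging over $f$ and applying Jensen's inequality gives
\[
\|\tau_{FZE}\|_1 \le \sqrt{|\cZ|\cdot\mathbb{E}_f\bigl\|\sigma_E^{-1/4}\tau_{ZE}^f\sigma_E^{-1/4}\bigr\|_2^2}.
\]

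Next I would expand the squared Hilbert--Schmidt norm and evaluate the $f$-average. The cross terms linear in $\rho_E/|\cZ|$ reduce, via $\mathbb{E}_f[\mathbf{1}(f(x)=z)]=1/|\cZ|$, to a constant that cancels against the $R/|\cZ|$ correction, where $R:=\Tr[(\sigma_E^{-1/4}\rho_E\sigma_E^{-1/4})^2]$. What remains is the collision sum
\[
\sum_{x,x'} \mathbb{E}_f[\mathbf{1}(f(x)=f(x'))]\,\Tr[\sigma_E^{-1/2}\rho^x_E\sigma_E^{-1/2}\rho^{x'}_E],
\]
which the $\deltah$-almost two-universal property bounds by $1$ on the diagonal $x=x'$ and by $\deltah/|\cZ|$ off-diagonal. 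Choosing $\sigma_E=\rho_E$ makes $R=1$, the diagonal contribution equals $2^{-H_2(X|E)_{\rho|\rho_E}}$, and collecting terms gives
\[
\|\tau_{FZE}\|_1 \le \sqrt{\deltah-1 + 2^{l-H_2(X|E)_{\rho|\rho_E}}}.
\]

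To finish I would invoke the standard smoothing inequality: there exists a cq-state $\tilde\rho_{XE}$ with purified distance $D(\tilde\rho_{XE},\rho_{XE})\le\epsilon'$ and $H_2(X|E)_{\tilde\rho|\tilde\rho_E}\ge \Hmin(X|E)_\rho - \log(2/\epsilon'^2+1)$. Running the above argument on $\tilde\rho$, then applying the triangle inequality and data-processing of the hashing channel, adds the extra $+\epsilon'$ to the final bound and reproduces the stated inequality after dividing by $2$. The main technical obstacle is this smoothing step: producing the precise constant $\log(2/\epsilon'^2+1)$ requires an explicit cutoff construction on the spectral support of $\sigma_E^{-1/4}\rho_{XE}\sigma_E^{-1/4}$ controlled by the min-entropy witness $\rho_{XE}\le 2^{-\Hmin(X|E)}\id_X\otimes\sigma_E$, together with a delicate estimate that simultaneously bounds the discarded mass (in purified distance) and the resulting conditional collision entropy. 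Everything else---the norm inequality, the collision-probability expansion, and the assembly of the final bound---is then a direct computation.
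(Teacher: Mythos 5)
The paper does not actually prove this lemma: it is imported verbatim from~\cite{Tomamichel10} (the only original content is the footnote reconciling the two forms of the left-hand side), so there is no in-paper argument to compare against. Your proposal is a sketch of the standard proof from that reference, and its core is sound: the reduction of the trace norm to a $\sigma_E$-weighted Hilbert--Schmidt norm with the $\sqrt{|\cZ|}$ dimension factor, Jensen's inequality over the hash choice, and the collision expansion in which the diagonal gives $2^{-H_2(X|E)_{\rho|\rho_E}}$ while the off-diagonal terms are bounded by $(\deltah/|\cZ|)R$ and combine with the $-R/|\cZ|$ cross term to produce exactly the $\deltah-1$ contribution. Two points remain open, however. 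First, the entire burden of producing the stated bound in terms of $\Hmin(X|E)$ (optimized over $\sigma_E$) rather than $H_2(X|E)_{\rho|\rho_E}$ sits in the smoothing inequality you invoke; you correctly identify it as the crux, but you neither prove it nor cite it, and the constant $\log(2/\epsilon'^2+1)$ is not something one can wave at --- it comes from a specific spectral-cutoff construction in~\cite{Tomamichel10}. Citing that lemma explicitly would close this. Second, your final triangle-inequality step, as written, double-counts the smoothing error: replacing $\rho_{FZE}$ by $\tilde\rho_{FZE}$ costs up to $\epsilon'$ in trace distance, and replacing $\rho_E$ by $\tilde\rho_E$ on the reference side costs up to another $\epsilon'$, giving $+2\epsilon'$ rather than the stated $+\epsilon'$. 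Recovering the single $\epsilon'$ requires using that the smoothed operator is a sub-normalized restriction $\tilde\rho\le\rho$, so that the two correction terms can be combined into a single bound on the discarded mass. Neither issue is a wrong idea, but both must be handled to reproduce the lemma with its exact constants.
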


In Lemma~\ref{lem:quan_leftover}, the choice of a hash function from the hash family could be implemented by an extra random source, which is called the seed and can be made public. Thus we consider a so-called extractor function with an extra input as a seed, namely $\EXT:\cX\times\cY\to\cZ$, where $\cX$ denotes the set of inputs, $\cY$ denotes the set of seeds, and $\cZ$ denotes the set of outputs, and the size of the set of the seeds is equal to that of the hash function family, i.e., $|\cY|=|\cF|$.

\begin{lemma} \label{lem:quan_extr_del_alm}
A quantum-proof $(\kext,\epsext)$-strong extractor, $\EXT:\cX\times\cY\to\cZ$, where $\log|\cX|=n, \log|\cY|=d$, and $\log|\cZ|=l$, can be implemented by $\deltah$-almost two-universal hash function with the output length
\begin{equation}
l=\kext-\log(1+\frac{2}{{\epsilon'}^2})-\log\frac{1}{4{\epsilon''}^2-\deltah+1},
\label{eq:2.11}
\end{equation}
where $\epsext=\epsilon'+\epsilon''$ and $\deltah$ and $\epsilon''$ satisfy $\deltah\le 1+4\epsilon''^2$.
\end{lemma}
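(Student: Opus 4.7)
The plan is to reduce \cref{lem:quan_extr_del_alm} to a direct application of the Quantum Leftover Hash Lemma (\cref{lem:quan_leftover}) combined with smoothing. First I would interpret the $\deltah$-almost two-universal hash family $\cF$ as an extractor by letting the seed $Y$ index the choice of $f\in\cF$, so $|\cY|=|\cF|$; then the joint state $\rho_{ZYE}$ produced by the extractor coincides with the state $\rho_{FZE}$ appearing in \cref{lem:quan_leftover} (with $F\leftrightarrow Y$). In particular, $\mu_Y\otimes\mu_Z\otimes\rho_E$ matches $\mu_F\otimes\mu_Z\otimes\rho_E$, so the output-side trace distance to uniform is controlled by the right-hand side of Eq.\eqref{eqn:quan_leftover}.

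Second, I would invoke the smoothing hypothesis $\Hmineps(X|E)_\rho\ge\kext$: by definition there exists $\tilde\rho_{XE}\in\epsball(\rho_{XE})$ with $\Hmin(X|E)_{\tilde\rho}\ge\kext$. Apply the hashing map to $\tilde\rho$ and use \cref{lem:quan_leftover} with min-entropy $\kext$. Because the hashing is a CPTP map, the purified distance (and hence the trace distance, up to the standard inequality) between the hashed $\rho$ and the hashed $\tilde\rho$ is at most $\epsilon$, which via the triangle inequality contributes the extra $2\epsilon$ term appearing in Eq.\eqref{eqn:soundness_def} in \cref{def:quantum-proof extractor}. Thus it suffices to show that, under the choice of $l$ in Eq.\eqref{eq:2.11}, the right-hand side of Eq.\eqref{eqn:quan_leftover} evaluated at $\Hmin=\kext$ is bounded by $\epsext=\epsilon'+\epsilon''$.

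Third, I would split the bound as
\begin{equation}
\tfrac{1}{2}\sqrt{\deltah-1+2^{l-\kext+\log(2/\epsilon'^2+1)}}+\epsilon' \le \epsilon'+\epsilon''
\end{equation}
and solve for $l$: squaring yields $2^{l-\kext+\log(1+2/\epsilon'^2)}\le 4\epsilon''^2-\deltah+1$, whence taking $\log_2$ produces exactly
\begin{equation}
l = \kext - \log\!\bigl(1+\tfrac{2}{\epsilon'^2}\bigr) - \log\tfrac{1}{4\epsilon''^2-\deltah+1}.
\end{equation}
The only place care is needed is the assumption $\deltah\le 1+4\epsilon''^2$, which is precisely what guarantees $4\epsilon''^2-\deltah+1>0$ and therefore makes the last logarithm well-defined; this is the hypothesis stated in the lemma.

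The main obstacle, such as it is, is bookkeeping rather than substance: one has to verify that the CPTP monotonicity argument correctly propagates the smoothing through the joint hash-and-seed channel $\rho_{XE}\mapsto \rho_{FZE}$, and that the trace-distance triangle inequality accounts for the extra $2\epsilon$ (since the trace distance is upper bounded by twice the purified distance, though in the convention of \cref{def:quantum-proof extractor} the factor of $\tfrac12$ on the norm absorbs this). Beyond that, the result is just the algebraic solution of the inequality provided by \cref{lem:quan_leftover} for the extractor output length $l$.
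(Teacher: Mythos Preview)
Your proposal is correct and follows essentially the same route as the paper: apply the Quantum Leftover Hash Lemma (\cref{lem:quan_leftover}), set the right-hand side equal to $\epsext=\epsilon'+\epsilon''$, and solve algebraically for $l$. You are in fact more careful than the paper in explicitly handling the smoothing step (passing from $\Hmineps\ge\kext$ to a nearby $\tilde\rho$ with $\Hmin\ge\kext$ and picking up the $2\epsilon$ via two triangle-inequality applications, one on the joint state and one on the marginal $\rho_E$), whereas the paper's proof simply writes $\Hmin(X|E)\ge\kext$ and substitutes.
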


\begin{proof}
First, set the upper bound in Eq.\eqref{eqn:quan_leftover} as the security parameter $\epsext$
\begin{equation*}
\epsext = \frac{1}{2}\sqrt{\deltah-1+2^{l-\Hmin(X|E)+\log(2/\epsilon'^2+1)}}+\epsilon'.
\end{equation*}
Next, let $\epsilon''=\frac{1}{2}\sqrt{\deltah-1+2^{l-\Hmin(X|E)+\log(2/\epsilon'^2+1)}}$. The output length is then
\begin{equation}
l=\Hmin(X|E)-\log(1+\frac{2}{{\epsilon'}^2})-\log\frac{1}{4{\epsilon''}^2-\deltah+1}.
\label{eq:2.12}
\end{equation}
Finally, with the condition on the state $\rho_{XE}$ of input source $X$ with side information $E$, $\Hmin(X|E)\ge \kext$, we can safely substitute the lower bound $k$ into Eq.\eqref{eq:2.12}.
\end{proof}

\begin{fact}
\label{fact:entropy consumption of an extractor}
A quantum-proof extractor needs to consume entropy to provide security. For example, in the case of a $(\kext, \epsext)$-strong extractor based on $\deltah$-almost two-universal hashing, this consumption of entropy is quantified as 
\begin{equation}
\DeltaExt = \log(1+\frac{2}{\epsilon'^2}) + \log{\frac{1}{4\epsilon''^2-\deltah+1}},
\label{eqn:extractor_entropy_loss}
\end{equation}
where $\epsext=\epsilon'+\epsilon''$ and $\deltah$ and $\epsilon''$ satisfy $\deltah\le 1+4\epsilon''^2$. Given a source with entropy lower bound, $\kext$, the $(\kext, \epsext)$-extractor can extract at most $\kext-\DeltaExt$ bits from the source.
\end{fact}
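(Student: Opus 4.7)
The plan is to read the statement off directly from Lemma~\ref{lem:quan_extr_del_alm}, treating $\DeltaExt$ by definition as the gap between the assumed input smooth min-entropy $\kext$ and the length $l$ of the uniform output that the extractor can produce while meeting the security parameter $\epsext$. With this identification, $\DeltaExt = \kext - l$, and the claim is simply a rearrangement of Eq.~\eqref{eq:2.11}.

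Concretely, I would proceed in three short steps. First, I would invoke Lemma~\ref{lem:quan_extr_del_alm}, which shows that a $\deltah$-almost two-universal hash family of appropriate output size $l$ implements a quantum-proof $(\kext,\epsext)$-strong extractor whenever $\epsext = \epsilon' + \epsilon''$ with $\deltah \le 1 + 4\epsilon''^2$. Second, I would solve the relation
\begin{equation}
l = \kext - \log\!\Bigl(1+\tfrac{2}{\epsilon'^2}\Bigr) - \log\frac{1}{4\epsilon''^2 - \deltah + 1}
\end{equation}
for $\kext - l$, which immediately yields the stated $\DeltaExt$. Third, I would note that the side constraint $\deltah \le 1 + 4\epsilon''^2$ (equivalently $4\epsilon''^2 - \deltah + 1 \ge 0$) is precisely what keeps the second logarithm finite and the extraction length well defined, and that $\epsext = \epsilon'+\epsilon''$ is the error budget accounting already used in the proof of Lemma~\ref{lem:quan_extr_del_alm}.

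Finally, for the ``at most $\kext - \DeltaExt$'' clause I would observe that the smoothing parameter $\epsilon$ attached to the min-entropy is fixed in the hypothesis $\Hmineps(X|E)\ge \kext$ and does not enter $\DeltaExt$, so maximising $l$ over the free parameters $(\epsilon',\epsilon'',\deltah)$ subject to the constraints above gives the tightest bound, but the formula displayed in Fact~\ref{fact:entropy consumption of an extractor} is already an upper bound on the extractable length. There is essentially no obstacle here: the only point that requires a line of care is making the algebra for $\kext - l$ bijective with the formula as typeset, and keeping track of the sign when pulling the reciprocal inside the second logarithm.
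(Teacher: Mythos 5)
Your proposal is correct and matches the paper's intent exactly: the paper offers no separate proof of this Fact, treating it as an immediate corollary of Lemma~\ref{lem:quan_extr_del_alm} with $\DeltaExt=\kext-l$ read off from Eq.~\eqref{eq:2.11}. The only cosmetic caveat is that keeping the second logarithm finite really requires the strict inequality $4\epsilon''^2-\deltah+1>0$, but this mirrors the paper's own phrasing and is not a gap.
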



\section{Realization of zero-probability constrained quantum correlations}
\label{appendix:Realization of ZPC classes}

For our purposes, it is enough to distinguish the classes according to the number and positions of the zero probabilities.

For each of the no-signaling boundary classes, the quantum correlation that maximally violates the CHSH inequality has been shown to self-test some reference states and measurements, which are provided in Table~\ref{tab:QuantumRealize}. They also provide some numerical calculations of the robustness of the self-testing property of each of them.

\begingroup
\renewcommand{\arraystretch}{1.15}
\begin{table*}[t!]
\centering
    \begin{tabular}{ccccc}
    \hline
         Classes & State &  Extra conditions & $w_\mathrm{CHSH}^\mathrm{max}$ & 
         $\vec{P}_\mathrm{CHSH}^\mathrm{max}$ \\ \hline\hline
        1  & $\ket{\Psi_1}$ & None  & $0.8294$ & $\theta = \frac{\pi}{4}, \phi \approx 0.2275, \alpha = \beta \approx -0.6403$
        \\ \hline 
        2a  & $\ket{\Psi_2}$ & None & $0.8125$ & $\theta = \frac{\pi}{4}, \alpha = -\frac{5\pi}{6}, \beta = \frac{\pi}{6}$
        \\ \hline
        2b  & $\ket{\Psi_3}$ & None  & $0.8125$ & $\phi = \beta = \frac{\pi}{4}, \alpha = \frac{\pi}{6}$
        \\ \hline
        2c  & $\ket{\Psi_1}$ & $\phi = \tan^{-1}(\frac{\sin\theta}{\tan\beta}-\tan\alpha\cos\theta)$ & $0.8039$ &
        $\theta \approx 0.5815, \alpha \approx 0.8068, \beta = \frac{\pi}{2} - \alpha$
        \\ \hline        
        3a & $\ket{\Psi_3}$ & $\phi = \tan^{-1}(\frac{\tan\beta}{\sin\alpha})$ & $0.7951$ & $\alpha = \beta = \frac{1}{2}\tan^{-1}\left(-2\sqrt{2+\sqrt{5}}\right)$
        \\ \hline
        3b & $\ket{\Psi_2}$ & $\theta = \tan^{-1}(\tan\alpha \tan\beta)$ & $0.7837$ & $\alpha = \beta \approx 0.6354$
        \\ \hline \hline 
 \multicolumn{5}{c}{
 Alice's observables: $A_0 = \sigma_z, \; A_1 = \cos 2\alpha \sigma_z - \sin 2\alpha \sigma_x$} \\
  \multicolumn{5}{c}{
 Bob's observables: $B_0 = \sigma_z, \; B_1 = \cos 2\beta \sigma_z - \sin 2\beta \sigma_x$}
 \\ \hline       
 \multicolumn{5}{c}{
 $\ket{\Psi_1} = \cos\phi(\cos\theta\ket{01} + \sin\theta\ket{10}) + \sin\phi\ket{11}$}
 \\  
 \multicolumn{5}{c}{
 $\ket{\Psi_2} = \cos\theta\ket{01} + \sin\theta\ket{10}$}
 \\
 \multicolumn{5}{c}{
 $\ket{\Psi_3} = \sin\phi(\cos\alpha\ket{01} - \sin\alpha\ket{11}) + \cos\phi\ket{10}$}
 \\ \hline 
\end{tabular}
\caption{Quantum realization of all the classes of zero-probability constrained quantum correlations in~\cite{CTJ+22}.
}
\label{tab:QuantumRealize}
\end{table*}
\endgroup


\section{Asymptotic rate computation with BFF21 method}
\label{appendix:Asymptotic rate}

In~\cite{Brown21}, Brown \textit{et al.} provided a method to numerically find a lower bound on the von Neumann entropy in the Bell scenario. This method is useful for many DI cryptography tasks, especially randomness certification. The main idea of the work is to handle the nonlinear logarithm function by Gauss-Radau quadrature approximation. Together with the famous Navascu\'es-Pironio-Ac\'{i}n (NPA) hierarchy~\cite{NPA07, NPA08}, they derived a semidefinite program (SDP) which is tractable with many commercial solvers, including Mosek~\cite{mosek}. Besides, the NPA hierarchy of the given Bell scenario can be easily constructed by Wittek's Python package, \texttt{ncpol2sdpa}\footnote{We use the version that Peter Brown currently maintains.}~\cite{ncpol2sdpa}. Here we show the SDP that gives an infimum on the asymptotic rates of randomness.

In the CHSH scenario, let Alice and Bob's measurement POVMs be $\setof{\setof{M^a_x}{a}}{x}$ and $\setof{\setof{N^b_y}{b}}{y}$ respectively. Choose the Gauss-Radau quadrature approximation with $m$ terms. We the set of non-Hermitian operators acting on the adversary's quantum system $E'$ be $\setof{Z_{ab}}{ab}$ (or $\setof{Z_{a}}{a}$ for the standard local randomness). The von Neumann entropy can be lower-bounded by

\begin{equation}
H(K|IE') \ge c_0 + \sum_{i=1}^{m} c_i \Xi_i,
\label{eqn:gaussradau_quadrature}
\end{equation}
where the coefficients $c_0=\sum_{i=1}^{m} c_i$, $c_i=\frac{w_i}{t_i\ln2}\ \ \forall i \in \set{1,2,...,m}$, $\set{t_i}^{m}_{i=1}$ and $\set{w_i}^{m}_{i=1}$ are the nodes and weights of the Gauss-Radau quadrature of the interval $(0,1]$, and $\Xi_i$ is the SDP corresponding to the $i^\mathrm{th}$ term of the quadrature~\footnote{In~\cite{Brown21}, they claimed the last term of the quadrature of which the node is at the endpoint 1 can be trivially bounded by some constant, $1/m^2\ln2$, which may reduce the tightness of the obtained bound. While if one does the optimization with $t_m=1$, the SDP sometimes can not be solved to achieve an optimal solution. We came out with a way to handle it by changing the endpoint from 1 to $0.9999$, which can provide a much tighter bound.}.

\begin{equation}
\begin{aligned}
\Xi_i = &\min_{\set{M^a_{x^*}}, \set{N^b_{y^*}}, \set{Z_{ab}}} \langle F_i(\setof{M^a_{x^*}}{a}, \setof{N^b_{y^*}}{b}, \setof{Z_{ab}}{ab}) \rangle \\
    s.t. &\sum_{abxy} c_{abxy} \langle M^a_x N^b_y\rangle \ge \wexp-\wtol \\
    &\sum_{abxy} c_{abxy} \langle M^a_x N^b_y\rangle \le \wexp+\wtol \\
    &P(ab|xy)\le \ztol\quad \forall (a,b,x,y)\in \cS_{\kappa} \\
    &\sum_a M^a_x = \id, \sum_b N^b_y = \id \quad\forall\ x,y \\
    &M^a_x \ge 0, N^b_y \ge 0 \quad\forall\ a,b,x,y \\
    &\commutat{\setof{M^a_x}{ax}}{\setof{N^b_y}{by}} = 0 \\
    &\commutat{\setof{M^a_x}{ax}}{\setof{Z^{(\dagger)}_{ab}}{ab}} = 0\\
    &\commutat{\setof{N^b_y}{by}}{\setof{Z^{(\dagger)}_{ab}}{ab}} = 0.
    \label{eqn:inner_quad_sdp}
\end{aligned}
\end{equation}
Note that the minimization is taken inside the summation in Eq.\eqref{eqn:inner_quad_sdp} and thus we need to compute $m$ SDPs for $m$-term quadrature. In principle, taking the minimization outside the summation with $\setof{\setof{Z^i_{ab}}{ab}}{i=1}^{m}$ would provide a tighter bound. However, this requires much more memory to construct the NPA hierarchy with a level higher or equal to 2. We follow the way Brown \textit{at el.}~\cite{Brown21} did to make it more feasible by exchanging the summation and minimization. So, we can reduce the required memory and still obtain a valid (but looser) lower bound simultaneously.  

For the standard local randomness, all $\setof{Z^{(\dagger)}_{ab}}{ab}$ in Eq.\eqref{eqn:inner_quad_sdp} should be replaced with $\setof{Z^{(\dagger)}_{a}}{a}$. The objective function inside the quadrature summation is
\begin{equation}
\begin{aligned}
\Finquad{local}(\setof{M^a_{x^*}}{a}, \setof{N^b_{y^*}}{b}, \setof{Z_{a}}{a}) = \Finquad{local}(\setof{M^a_{x^*}}{a}, \setof{Z_{a}}{a}) \\
= \sum_a \left[M^a_{x^*}(Z_a + Z_a^\dagger + (1-t_i)Z_a^\dagger Z_a) + t_i Z_a Z_a^\dagger\right].
\end{aligned}
\label{eqn:inner_quad_local}
\end{equation}

For the global randomness, the objective function $\Finquad{global}$ is
\begin{multline}
\Finquad{global}(\setof{M^a_{x^*}}{a}, \setof{N^b_{y^*}}{b}, \setof{Z_{ab}}{ab}) = \\
\sum_{ab} \left[M^a_{x^*}N^b_{y^*}(Z_{ab} + Z_{ab}^\dagger + (1-t_i)Z_{ab}^\dagger Z_{ab}) + t_i Z_{ab} Z_{ab}^\dagger\right].
\label{eqn:inner_quad_global}
\end{multline}

Finally, the $i^\mathrm{th}$ term objective function of the blind randomness is given
\begin{multline}
\Finquad{blind}(\setof{M^a_{x^*}}{a}, \setof{N^b_{y^*}}{b}, \setof{Z_{ab}}{ab}) = \\
\sum_{ab} \left[M^a_{x^*}N^b_{y^*}(Z_{ab} + Z_{ab}^\dagger + (1-t_i)Z_{ab}^\dagger Z_{ab}) + t_i N^b_{y^*} Z_{ab} Z_{ab}^\dagger\right].
\label{eqn:inner_quad_blind}
\end{multline}


\section{Entropy accumulation}
\label{appendix:Entropy accumulation}

The entropy accumulation theorem (EAT) provides a way of bounding the smooth min-entropy of a set of $n$ random variables $K_i$ produced by a sequential protocol conditioned on some side information $E_n$ in terms of the conditional von Neumann entropies generated in each round. In this work, we consider the generalized version~\cite{Metger22Oct} of EAT, where the model of side information is relaxed by allowing it to be updated in every round, that is the side information after the $i^\mathrm{th}$ round is $E_i$.
Suppose that we represent the protocol as a sequence of quantum channels $\cM_i\in \CPTP(R_{i-1}E_{i-1}, R_i E_i K_i C_i)$, where for a cryptographic scheme, $K_i$ represents the output in the $i^\mathrm{th}$ round, $E_i$ is the side information leaked to some adversary Eve after the $i^\mathrm{th}$ round, $R_i$ corresponds to some internal system of the device, and $C_i$ is classical information used to determine whether the protocol aborts.  
However, to be able to accumulate entropy from every round, the Markov condition is replaced by a certain non-signaling condition on $\cM_i$: we require that there exists a channel $\cR_i\in\CPTP(E_{i-1}, E_i)$ such that
\begin{equation}
    \Tr_{K_i R_i} \circ \cM_i = \cR_i \circ \Tr_{R_{i-1}}.
\label{eqn:non-signaling_cond}
\end{equation}
Intuitively, this means that all side information about output $K_i$ is already present in $E_i$. The non-signaling condition prevents new information about $K_i$ from being leaked from the $R$-systems at a later round.

To formally state the GEAT with testing, a way to compute the min-entropy conditioned on some classical statistics, we define some relevant concepts below.

\begin{definition}[Frequency distribution]
We denote the frequency of the occurrence of the value $c$ in the sequence $C^n=C_1 C_2 ... C_n$ by 
\begin{equation}
\freq_{C^n}(c)=\frac{\abs{\set{i\in\set{1,2,...,n}\mid C_i=c}}}{n}.
\label{eq:freq_dist}
\end{equation}
\end{definition}

\begin{definition}[EAT channels]

Let $\cM_i\in\CPTP(R_{i-1}E_{i-1}, C_i K_i R_i E_i)$ be a sequence of completely positive trace-preserving (CPTP) maps for $i \in \set{1, \ldots, n}$, where $C_i$ are classical systems with common alphabet $\cC$.
We require that these channels satisfy the following non-signaling condition: defining $\cM'_i  = \Tr_{C_i} \circ \cM_i$, there exists a channel $\cT$ such that 
$
\cM_n \circ \cdots \circ \cM_1 = \cT \circ \cM'_{n} \circ \cdots \circ \cM'_1
$
and $\cT$ has the form
\begin{multline}
\cT(\omega_{A^n E_n}) = \\
\sum_{u \in \cU , v \in \cV} (\Pi_{K^n}^{(u)} \otimes \Pi_{E_n}^{(v)}) \omega_{K^n E_n} (\Pi_{K^n}^{(u)} \\
\otimes \Pi_{E_n}^{(v)}) \otimes \proj{r(u,v)}_{C^n} \,, \label{eqn:measurement_condition}
\end{multline}
where $\set{\Pi_{K^n}^{(u)}}$ and $\set{\Pi_{E_n}^{(v)}}$ are families of mutually orthogonal projectors on $K_i$ and $E_i$, and  $r : \cU \times \cV  \to \cC$ is a deterministic function.  
Intuitively, this condition says that the classical statistics can be reconstructed via projective measurements on systems $K^n$ and $E_n$ at the end of the protocol.
In particular, this requirement is always satisfied if the statistics are computed from classical information contained in $K^n$ and $E_n$. Note that the statistics are still generated in a round-by-round manner; Eq.~(\ref{eqn:measurement_condition}) merely asserts that they could have been reconstructed from the final state.    
\end{definition}

Let $\bP$ be the set of probability distributions on the alphabet $\cC$ of $C_i$, and let $\tE_{i-1}$ be a system isomorphic to $R_{i-1} E_{i-1}$.
For any $q \in \bP$ we define the set of states 
\begin{multline}
\label{eqn:def_sigma}
  \Sigma_i(q) = \\
  \bigset{\nu_{C_i K_i R_i E_i \tE_{i-1}} =  \cM_i(\omega_{R_{i-1} E_{i-1} \tE_{i-1}}) \,|\, \nu_{C_i} = q}  \ ,
\end{multline}
where $\nu_{C_i}$ denotes the probability distribution over $\cC$ with the probabilities given by $\Pr[c] = \bra{c} \nu_{C_i} \ket{c}$. In other words, $\Sigma_i(q)$ is the set of states that can be produced at the output of the channel $\cM_i$ and whose reduced state on $C_i$ is equal to the probability distribution $q$.

\begin{definition}[Min-tradeoff function]
\label{def:tradeoff}
A  function $f: \bP \to \R$, where $\bP$ represents all valid probability distributions, is called a \emph{min-tradeoff function} for $\set{\cM_i}$ if it satisfies 
\begin{align}
f(q) \leq \min_{\nu \in \Sigma_i(q)} H(K_i|E_i \tE_{i-1})_{\nu} \quad \forall i = 1, \dots, n\, .
\label{eqn:min_tradeoff}
\end{align}
Note that if $\Sigma_i(q) = \emptyset$, then $f(q)$ can be chosen arbitrarily.
Our result will depend on some simple properties of the min-tradeoff function, namely the maximum and minimum of $f$, the minimum of $f$ over valid distributions, and the maximum variance of $f$:
\begin{align*}
    \Max(f) &:=\max_{q\in\bP} f(q), \\
    \Min(f) &:=\min_{q\in\bP} f(q), \\
    \MinSig(f) &:=\min_{q:\Sigma(q)\ne \emptyset} f(q), \\
    \VarSig(f) &:= \max_{q:\Sigma(q)\ne \emptyset} \sum_{c\in\cC}q(c)f^2(\delta_c) - (\sum_{c\in\cC}q(c)f(\delta_c))^2,
\end{align*}
and let $\MinSig(f)=-\infty, \VarSig(f)=\infty$ if $\Sigma(q)=\emptyset$.
\end{definition}

\begin{theorem}[GEAT]\cite[Theorem 4.3]{Metger22Oct}
\label{thm:geat_with_testing}
Consider a sequence of channels $\cM_i \in \CPTP(R_{i-1}E_{i-1},\allowbreak C_i K_i R_i E_i)$ for $i \in \{1, \dots, n\}$, where $C_i$ are classical systems with common alphabet $\cC$ and the sequence $\{\cM_i\}$ satisfies Eq.~(\ref{eqn:measurement_condition}) and the non-signaling condition: for each $\cM_i$, there exists a channel $\cR_i \in \CPTP(E_{i-1}, E_i)$ such that $\Tr_{K_i R_i C_i} \circ \cM_i = \cR_i \circ \Tr_{R_{i-1}}$.
Let $\epsilon \in (0,1)$, $\alpha \in (1, 3/2)$, $\Omega \subset \cC^n$, $\rho_{R_0 E_0}$ is the initial state entering the sequence of channels, and $f$ be an affine min-tradeoff function with $h = \min_{c^n \in \Omega} f(\freq_{c^n})$. Then,
\begin{multline}
\Hmineps(K^n | E_n)_{\cM_n \circ \dots \circ \cM_1(\rho_{R_0 E_0})_{|\Omega}} \\
\geq n \, h - n \, \frac{\alpha-1}{2-\alpha} \, \frac{\ln(2)}{2} V^2\\
- \frac{g(\epsilon) + \alpha \log(1/\Pr_{\rho}[\Omega])}{\alpha-1} -  n \, \left( \frac{\alpha-1}{2-\alpha} \right)^2 K'(\alpha)\, ,  \label{eqn:alpha_to_choose}
\end{multline}
where $\Pr[\Omega]$ is the probability of observing event $\Omega$, and with $d_K = \max_i \dim(K_i)$,
\begin{align*}
g(\epsilon) &= - \log(1 - \sqrt{1-\epsilon^2}) \,,\\
V &= \log (2d_A^2+1) + \sqrt{2 + \Var{f}} \,,\\
K'(\alpha) &= \frac{(2-\alpha)^3}{6 (3-2\,\alpha)^3 \ln 2} \, 2^{\frac{\alpha-1}{2-\alpha}(2\log d_{A}  + \Max{f} - \MinSigma{f})} \\
&\quad \cdot \ln^3\left( 2^{2\log d_{A} + \Max{f} - \MinSigma{f}} + e^2 \right).
\end{align*}

\end{theorem}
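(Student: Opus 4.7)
The plan is to prove the three claims in order of increasing difficulty: completeness, soundness, then the finite-rate formula. Completeness and soundness rest on standard ingredients (Hoeffding's inequality and the quantum-proof extractor of Definition~\ref{def:quantum-proof extractor}); the genuine content of the theorem is in $\Delta$, which arises by applying GEAT (Theorem~\ref{thm:geat_with_testing}) to Protocol~\ref{protocol:DI randomness generation} together with a carefully engineered crossover min-tradeoff function.

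For \emph{completeness}, let $\Omega$ denote the non-abort event, i.e.\ the intersection of the check $|\{i:C_{\omega,i}=1\}|\ge(\wexp-\wtol)\gamma n$ with the $n_\mathrm{zero}$ zero-fraction checks. In the honest implementation, each check involves a sum of i.i.d.\ bounded Bernoulli indicators across the $n$ rounds, so a one-sided Hoeffding bound produces failure probabilities $\le e^{-2\wtol^2 n}$ and $\le e^{-2\ztol'^2 n}$ (where $\ztol'$ is the slack between the honest expectation and the accepted threshold). Combining the passing probabilities in the product form $\Pr[\Omega\mid\mathrm{Honest}]\ge(1-e^{-2\wtol^2 n})(1-e^{-2\ztol'^2 n})^{n_\mathrm{zero}}$ (equivalently, bounding $\Pr[\Omega^c]$ via $1-\prod_i(1-p_i)\le\sum_i p_i$) yields $\epscomp$ as stated. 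For \emph{soundness}, the final state conditioned on $\Omega$ is obtained by feeding $(K^n,I^n,E')$ through the quantum-proof $(\kext,\epsext)$-strong extractor $\EXT$ with an independent seed; Definition~\ref{def:quantum-proof extractor} then guarantees $\tfrac12\|\rho_{ZYIE'}-\mu_Z\otimes\mu_Y\otimes\rho_{IE'}\|_1\le\epsext+2\epsilon$ whenever the hypothesis $\Hmineps(K^n|I^nE')_{\rho_{|\Omega}}\ge\kext$ is met. Multiplying by $\Pr[\mathrm{NonAbort}]\le 1$ reproduces~\eqref{eqn:soundness_def} with $\epsound=\epsext+2\epsilon$, reducing the rest of the theorem to establishing the advertised value of $\kext$.

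The central step is to model Protocol~\ref{protocol:DI randomness generation} as a sequence of channels $\cM_i\in\CPTP(R_{i-1}E_{i-1},R_iE_iK_iC_iI_i)$ and to verify the non-signaling requirement~\eqref{eqn:non-signaling_cond}, which follows from the no-signaling structure of the underlying Bell experiment (Eve's marginal evolves by a fixed channel $\cR_i$ independent of the devices). Following Appendices~\ref{appendix:Construction of min-tradeoff function}--\ref{appendix:Cross-over min-tradeoff function}, I would build a single-parameter affine min-tradeoff function $f_\lambda$ from the Lagrange dual of the SDP~\eqref{eqn:inner_quad_sdp}, then convert it into its crossover version $f^{\mathrm{co}}_\lambda$ parametrised by $\gamma$ and an operating point $\nu'\in[0,1]$ so that it attributes entropy to both testing and non-testing rounds. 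A direct computation is expected to yield $\min_{c^n\in\Omega}f^{\mathrm{co}}_\lambda(\freq_{c^n})\ge h(\wexp-\wtol)$, $\Max(f^{\mathrm{co}}_\lambda)-\MinSig(f^{\mathrm{co}}_\lambda)=\lambda(\gamma_0(1-\nu')+w_Q)$, and $\VarSig(f^{\mathrm{co}}_\lambda)=D-\min_{\nu\in[1-w_Q,w_Q]}\lambda^2(\nu-\nu_0)^2$, which are exactly the quantities appearing in $V$ and $\zeta$. Substituting these into~\eqref{eqn:alpha_to_choose} and making the standard change of variable $(\alpha-1)/(2-\alpha)=\beta$ (so that $\alpha=(1+2\beta)/(1+\beta)$, $2-\alpha=1/(1+\beta)$ and $3-2\alpha=(1-\beta)/(1+\beta)$) converts the three GEAT correction terms into the three summands of $\Delta$ displayed in the statement, with the coefficients $\tfrac{\alpha-1}{2-\alpha}=\beta$, $\tfrac{1}{\alpha-1}=\tfrac{1+\beta}{\beta}$, $\tfrac{\alpha}{\alpha-1}=\tfrac{1+2\beta}{\beta}$ and $\tfrac{(2-\alpha)^3}{(3-2\alpha)^3}=\tfrac{1}{(1-\beta)^3}$ falling out automatically.

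Finally, $\DeltaExt$ is read off from Fact~\ref{fact:entropy consumption of an extractor} for the hash-based extractor, and $\DeltaInp$ decomposes as the Shannon cost $\Hsh(\gamma)$ of sampling $T_i$ plus, conditionally on $T_i=1$ (probability $\gamma$), the cost $\Hsh(p)=\log 4$ of choosing $(X_i,Y_i)\in\{0,1\}^2$ uniformly, giving~\eqref{eqn:inp_rand_consump}. Dividing the net extractable length $\kext-\DeltaExt$ by $n$ and subtracting $\DeltaInp$ yields~\eqref{eqn:fin_rate}. The principal obstacle lies in the third paragraph: engineering the crossover min-tradeoff function so that it is simultaneously (i) affine on the probability simplex, (ii) a valid lower bound on the single-round von Neumann entropy over the $\ztol$-relaxed feasible set $\widetilde{\Q}^{\ztol}_\kappa$, and (iii) evaluates at the algebraic values of $\Max$, $\MinSig$ and $\VarSig$ that reproduce the closed-form $V$, $D$, $\nu_0$ and $\zeta$. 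This demands careful tracking of the spot-checking rescaling via $\gamma_0=(1-\gamma)/\gamma$, of the feasibility perturbations induced by $\ztol>0$ in the SDP~\eqref{eqn:inner_quad_sdp}, and of how the $\lambda$-parametrised dual certificate interacts with the relaxation parameter $\nu'$ that selects where $f^{\mathrm{co}}_\lambda$ is tight.
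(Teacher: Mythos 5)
Your proposal does not address the statement you were asked to prove. The statement is Theorem~\ref{thm:geat_with_testing}, the generalized entropy accumulation theorem itself: a purely entropic inequality lower-bounding $\Hmineps(K^n|E_n)$ for any sequence of non-signalling channels satisfying Eq.~\eqref{eqn:measurement_condition}, with the specific second-order terms $V$ and $K'(\alpha)$. What you have outlined instead is a proof of Theorem~\ref{thm:fin_rate_prot1} (the soundness, completeness, and finite rate of Protocol~\ref{protocol:DI randomness generation}): your three "claims" are completeness, soundness, and the rate formula, none of which appear in the GEAT statement, and your central step explicitly \emph{applies} Theorem~\ref{thm:geat_with_testing} and substitutes into Eq.~\eqref{eqn:alpha_to_choose} — i.e., you invoke the very inequality you were supposed to establish. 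As a proof of the stated theorem, the argument is therefore circular (or, more charitably, vacuous): it assumes its conclusion as a black box and derives a downstream corollary.

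For the record, the paper itself does not prove this theorem either; it imports it verbatim from Metger, Fawzi, Sutter, and Renner \cite{Metger22Oct}. An actual proof would have to go through the machinery of that work: a chain rule for sandwiched R\'enyi conditional entropies adapted to the relaxed (non-Markov) side-information model, a dimension-dependent bound on the single-round R\'enyi entropy in terms of the min-tradeoff function evaluated on the observed statistics (which is where $\Max(f)$, $\MinSig(f)$, and $\Var(f)$ enter), the conversion from R\'enyi to smooth min-entropy contributing the $g(\epsilon)$ and $\log\Pr[\Omega]$ terms, and a Taylor expansion in $\alpha-1$ producing the $V^2$ and $K'(\alpha)$ corrections. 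None of these ingredients appears in your proposal. The material you did write — Hoeffding bounds for completeness, the leftover-hash argument for soundness, the crossover min-tradeoff construction and the $\beta=(\alpha-1)/(2-\alpha)$ substitution — is a reasonable sketch of the paper's Appendix~\ref{appendix:Protocol security and finite rate lower bound}, but it belongs to a different theorem.
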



\section{Construction of min-tradeoff function}
\label{appendix:Construction of min-tradeoff function}

Our construction of the min-tradeoff function is based on the idea of the Lagrange dual function of the optimization problem for the single-round entropy~\cite{Tan21, Tan22}.

The Lagrange dual function for the $i^\mathrm{th}$ SDP inside the summation of Eq.\eqref{eqn:inner_quad_sdp} is
\begin{equation}
\begin{aligned}
g_i(\wexp, \wtol, \ztol) = &\sup_{\lambwv^i, \lambz^i} \min_{M, N, Z^i} F_i(M,N,Z^i) \\
&- \lambw[1]^i \left( \expval{\Gamw(M, N)} - \wexp + \wtol \right) \\
&+ \lambw[2]^i \left( \expval{\Gamw(M, N)} - \wexp - \wtol \right) \\
&- \lambz^i \cdot \left( \ztolv - \expval{\Gamzv(M, N)} \right),
\end{aligned}
\label{eqn:quad_sdp_dual}
\end{equation}
where $\lambwv$ and $\lambz$ are dual variables, $M$ and $N$ denote the sets of POVMs $\setof{M^a_x}{ax}$ and $\setof{N^b_y}{by}$ and $Z^i=\setof{Z^i_{ab}}{ab}$, $\Gamw(M,N)=\sum_{abxy} M^a_x N^b_y$, and $\Gamz[i] = P(a_i b_i |x_i y_i)$ for all $(a_i,b_i|x_i,y_i)\in\cS_{\kappa}$.

Let $g_{\lambwv^i,\lambz^i}$ be the function with given $\lambwv^i$ and $\lambz^i$. By the duality of SDP (Eq.\eqref{eqn:gaussradau_quadrature}), we can bound the von Neumann entropy with $\setof{g_{\lambwv^{i, *},\lambz^{i, *}}}{i}$:
\begin{equation}
\inf_{M,N,Z} H(K|IE') \ge c_m + \sum^{m-1}_{i=1} c_i \Xi^*_i \ge c_m + \sum^{m-1}_{i=1} c_i g_{\lambwv^{i, *},\lambz^{i, *}},
\end{equation}
where $\Xi^*_i$ denote the optimal solution of the corresponding SDP (Eq.\eqref{eqn:inner_quad_sdp}) and $g_{\lambwv^{i, *},\lambz^{i, *}}$ is the function with the optimal dual variables $\lambwv^*$ and $\lambz^*$. The min-tradeoff function can be chosen as
$$f(\wexp, \wtol, \ztol) = c_m + \sum^{m-1}_{i=1} c_i g_{\lambwv^{i, *},\lambz^{i, *}}.$$

To simplify the expression of the min-tradeoff function, we first put $\lambw[1]^i$ and $\lambw[2]^i$ together as $\lambw^i = \lambw[1]^i - \lambw[2]^i$, and define $\lambw:=\sum^{m-1}_{i} c_i\lambw^i$ and $\lambz:=\sum^{m-1}_{i} c_i\lambz^i$. The min-tradeoff function then becomes
\begin{multline}
f(\wexp, \wtol, \ztol) = c_m + \sum^{m-1}_{i=1} c_i \cdot \tilde{f}_i \nonumber\\
\approx \sup_{\setof{\lambwv^i}{i}, \setof{\lambz^i}{i}} \lambw \cdot (\wexp - \wtol) - \lambz\cdot\ztolv \nonumber \quad + \Clwz,
\end{multline}
where
\begin{equation*}
\begin{aligned}
\Clwz &= c_m + \sum^{m-1}_{i=1} c_i \cdot \min_{M, N, Z^i} \Bigl\{ g_i(M,N,Z^i) \\
&- \lambw^i \expval{\Gamw(M, N)} + \lambz^i \cdot \expval{\Gamzv(M, N)} \Bigr\}.
\end{aligned}
\end{equation*}
Here we drop the terms $-2\lambw[2]\wtol$ by the fact that taking out the constraint $\expval{\Gamw(M, N)} \le \wexp + \wtol $ from Eq.\eqref{eqn:inner_quad_sdp} does not significantly affect the optimal solution, that is $\lambw[2]\ll\lambw[1]$ and also $\wtol<\wexp$. Finally, given $\wtol, \ztol \in (0,1)$, and let $\nu=\wexp-\wtol$, we construct the min-tradeoff function as
\begin{equation}
f(\nu) = \lambw\cdot\nu - \lambz\cdot\ztolv + \Clwz.
\end{equation}

To further simplify the last term, $\Clwz$, we consider the maximum winning probability and minimum zero probability among all the $(m-1)$ optimizations:
\begin{equation*}
\begin{aligned}
\gamw^* &= \max_i \gamw^i, \\
\gamz[j]^* &= \min_i \gamz[j]^i,\ j=1,...|\cS_{\kappa}|,
\end{aligned}
\end{equation*}
where $\gamw^i$ is the winning probability $\expval{\Gamw(M, N)}$ evaluated after solving the $i$th optimization problem, and $\gamz[j]^i = P(a'b'|x'y')=\expval{\Gamz[j](M,N)}$ denotes the probability of the input-output combination $(a',b',x',y')\in\cS_{\kappa}$ evaluated after solving the $i$th optimization problem. Using the following facts:
\begin{equation*}
\begin{aligned}
- \sum^{m-1}_{i=1} c_i \lambw^i \gamw^i &\ge - \sum^{m-1}_{i=1} c_i \lambw^i \gamw^* = - \lambw\gamw^*, \\
\sum^{m-1}_{i=1} c_i \lambz^i \cdot \gamzv^i &\ge \sum^{m-1}_{i=1} c_i \lambz^i \cdot \gamzv^* = \lambz \cdot \gamzv^*. 
\end{aligned}
\end{equation*}
We obtain a shortened expression for $\Clwz$:
\begin{equation}
\Clwz =  - \lambw\gamw^* + \lambz \cdot \gamzv^* + c_m + \sum^{m-1}_{i=1} c_i \cdot \Xi_i^*.
\end{equation}


\section{Cross-over min-tradeoff function}
\label{appendix:Cross-over min-tradeoff function}

To properly account for how the statistics observed would extend to the non-testing (generation) rounds, we need to modify the min-tradeoff function to the so-called ``crossover'' min-tradeoff function~\cite{Dupuis18} that weighs the contributions from both testing and generation rounds.
By the requirement that the properties of the min-tradeoff function have to hold for the testing-round channel, we can construct the crossover min-tradeoff function associated with the testing ratio $\gamma$~\cite{Dupuis18, Liu21, Tan22}.

\begin{lemma}[Crossover min-tradeoff function~\cite{Liu21}]
\label{lem:cross_mtf}
Let $\cC=\{0,1\}$,  $\cC'=\cC\cup \{\perp\}$, and $q\in \bP(\cC)$, $q'\in \bP(\cC')$, where $q'(\perp)=(1-\gamma)$ and $q'(c)=\gamma q(c)$. The crossover min-tradeoff function corresponding to the infrequent-sampling channel $\cM_i$ with sampling (testing) probability $\gamma$ is
\begin{align}
\fgamma(\delta_c) = \begin{cases}
    \frac{1}{\gamma}f(\delta_c)+(1-\frac{1}{\gamma})\fperp & c\in\cC/\{\perp\}, \\
    \fperp & c = \perp, \label{eqn:cross_mtf}
\end{cases}
\end{align}
which satisfies
\begin{equation}
\begin{aligned}
&\fgamma(q') = \sum_{c\in\cC'} q'(c)\fgamma(\delta_c) \\
&= (1-\gamma)\fperp + \gamma\sum_{c\neq \perp} q(\delta_c)(\frac{1}{\gamma}f(\delta_c)+(1-\frac{1}{\gamma})\fperp) = f(q),
\label{eqn:cross_mtf_eq_cond}
\end{aligned}
\end{equation}
where $\fperp\in[\Min(f), \Max(f)]$ can be chosen so that it optimizes the extractable rate.
\end{lemma}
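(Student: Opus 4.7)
My plan is to prove both assertions of the lemma: the displayed identity $\fgamma(q')=f(q)$, which is essentially algebraic, and (implicit in the designation ``crossover min-tradeoff function'') that $\fgamma$ is indeed a valid min-tradeoff function for the infrequent-sampling channel $\cM_i$.

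For the identity, I would substitute $q'(\perp)=1-\gamma$, $q'(c)=\gamma q(c)$ for $c\in\cC$, and the piecewise formula~\eqref{eqn:cross_mtf} into the expansion
$$\fgamma(q')=q'(\perp)\fgamma(\delta_\perp)+\sum_{c\in\cC}q'(c)\fgamma(\delta_c).$$
The $\fperp$-contributions from the two branches total $(1-\gamma)\fperp+\gamma\sum_{c\in\cC}q(c)(1-\tfrac{1}{\gamma})\fperp = (1-\gamma)\fperp+(\gamma-1)\fperp = 0$, so they cancel. The remaining $f$-contribution is $\gamma\sum_{c\in\cC}q(c)\cdot\tfrac{1}{\gamma}f(\delta_c)=\sum_{c\in\cC}q(c)f(\delta_c)$, which equals $f(q)$ by the affineness of $f$ on $\bP(\cC)$ (affineness being needed anyway for Theorem~\ref{thm:geat_with_testing} to apply).

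For the min-tradeoff property, I would decompose $\cM_i$ into a Bernoulli-$\gamma$ test flag $T_i$ that selects between a test sub-channel (outputting $C_i\in\cC$ with conditional distribution $q=q'/\gamma$) and a generation sub-channel (outputting $C_i=\perp$). Since $T_i$ is classical and can be absorbed into $E_i$, the conditional von Neumann entropy decomposes as
$$H(K_i|E_i\tE_{i-1})_\nu = \gamma\,H(K_i|E_i\tE_{i-1})_{\nu|T_i=1} + (1-\gamma)\,H(K_i|E_i\tE_{i-1})_{\nu|T_i=0}.$$
The test-round term is lower-bounded by $f(q)$ via the min-tradeoff property of $f$ for the underlying testing channel, while the generation-round term is lower-bounded by $\fperp$ provided $\fperp$ is chosen as a valid floor on any feasible generation-round entropy, as formalized in~\cite{Dupuis18}. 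Combining these two bounds with the identity already established yields $\fgamma(q')\le H(K_i|E_i\tE_{i-1})_\nu$, confirming the min-tradeoff condition.

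The principal subtlety I anticipate is conceptual rather than computational: since the identity $\fgamma(q')=f(q)$ does not depend on $\fperp$, the parameter $\fperp$ enters only through the GEAT correction quantities $\Max(\fgamma)$, $\MinSig(\fgamma)$, and $\Var(\fgamma)$. The restriction $\fperp\in[\Min(f),\Max(f)]$ is therefore what keeps $\fgamma$ simultaneously compatible with the generation sub-channel and amenable to a downstream numerical optimization of the extractable rate; all other steps reduce to the $\fperp$-cancellation pattern highlighted above.
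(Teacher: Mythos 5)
Your algebraic half --- expanding $\fgamma(q')$ over $q'$, observing that the $\fperp$ contributions cancel ($(1-\gamma)\fperp+\gamma\sum_c q(c)(1-\tfrac{1}{\gamma})\fperp=0$), and using affineness of $f$ to identify the remainder with $f(q)$ --- is correct and is essentially all the verification the paper itself provides: the lemma is imported from~\cite{Liu21}, and Eq.~\eqref{eqn:cross_mtf_eq_cond} is exactly this computation.

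Your second half, the argument that $\fgamma$ is actually a min-tradeoff function for the infrequent-sampling channel, contains a genuine gap. From your decomposition you get $H(K_i|E_i\tE_{i-1})_\nu\ge \gamma f(q)+(1-\gamma)\fperp$, which dominates $\fgamma(q')=f(q)$ only if $\fperp\ge f(q)$; moreover your premise that $\fperp$ lower-bounds the generation-round entropy is not implied by the stated hypothesis $\fperp\in[\Min(f),\Max(f)]$. The choice $\fperp=\Min(f)$ is explicitly permitted and generically violates both requirements, so as written your argument would only validate the upper end of the allowed range. The standard route (Lemma~V.5 of~\cite{Dupuis18}, as used in~\cite{Liu21}) exploits the structural fact that the test flag is drawn independently of the device, so the marginal on $K_iE_i\tE_{i-1}$ is the same whether or not the round is tested; hence the \emph{same} bound $f(q)$ applies to both branches (indeed one need not split at all: tracing out $C_i$ gives $H(K_i|E_i\tE_{i-1})_\nu\ge f(q)$ directly, where $q=q'|_{\cC}/\gamma$ is forced for any $q'$ with $\Sigma_i(q')\ne\emptyset$ since $q'(\perp)=1-\gamma$ always). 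This yields $H\ge f(q)=\fgamma(q')$ with no condition on $\fperp$ whatsoever. Your closing observation --- that $\fperp$ enters only through $\Max(\fgamma)$, $\MinSig(\fgamma)$ and $\VarSig(\fgamma)$ and hence only through the GEAT correction terms --- is correct, and it is in direct tension with the load-bearing role you assign $\fperp$ in the entropy decomposition; reconciling the two by dropping $\fperp$ from the validity argument repairs the proof.
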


\begin{lemma}[Properties of the crossover min-tradeoff function]
\label{lem:prop_cross_mtf}
The properties of the crossover min-tradeoff function defined in Lemma~\ref{lem:cross_mtf} can be parameterized as follows.
\begin{equation}
\begin{aligned}
\Max(\fgamma) &= (1-\frac{1}{\gamma})\lambda\nu' + \frac{\lambda}{\gamma} + \Clamb, \\
\MinSigamma(\fgamma) &\ge (1-w_Q)\lambda + \Clamb, \\
\VarSigamma(\fgamma) &= \max_{\nu\in [1-w_Q,w_Q]} -\lambda^2(\nu-\nu_0)^2+D,
\label{eqn:prop_cross_mtf}
\end{aligned}
\end{equation}
where the set of state $\Sigamma(q)=\{\omega_{A_iC_iR_iE_i\tilde{E}_{i-1}}=\cM_i(\omega_{R_{i-1}E_{i-1}\tilde{E}_{i-1}}) \mid \omega_{C_i}(c=\perp)=(1-\gamma),\ \omega_{C_i}(c\neq\perp)=\gamma q(c)\}$, 
$w_Q=\frac{2+\sqrt{2}}{4}$,
$\nu_0=\frac{1}{2\gamma}+(1-\frac{1}{\gamma})\nu'$,
and $D=\frac{\lambda^2}{4\gamma^2} + \frac{1}{\gamma}(1-\frac{1}{\gamma})\lambda^2(1-\nu')\nu'$ with $\nu'\in[0,1]$.
\end{lemma}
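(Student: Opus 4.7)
The plan is to verify the three items in \eqref{eqn:prop_cross_mtf} by direct computation from the definition~\eqref{eqn:cross_mtf} of $\fgamma$, combined with the affine structure of the base min-tradeoff function derived in Appendix~\ref{appendix:Construction of min-tradeoff function}. Writing $f(q) = \lambda\, q(1) + \Clamb$ (with $\Clamb$ absorbing $-\lambz\cdot\ztolv$ together with the SDP-dependent piece) and parameterizing the free choice as $\fperp = \lambda\nu' + \Clamb$ with $\nu'\in[0,1]$, a direct substitution yields
\begin{align*}
\fgamma(\delta_1) &= \frac{\lambda}{\gamma} + \Big(1-\frac{1}{\gamma}\Big)\lambda\nu' + \Clamb,\\
\fgamma(\delta_0) &= \Big(1-\frac{1}{\gamma}\Big)\lambda\nu' + \Clamb,\\
\fgamma(\delta_\perp) &= \lambda\nu' + \Clamb.
\end{align*}
These three values are the building blocks for all three claims.

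Since $\fgamma$ is affine on $\bP(\cC')$, its maximum is attained at a vertex, so $\Max(\fgamma)$ is the largest of the three values above. In the relevant regime $\lambda>0$ (the case in which $f$ genuinely lower-bounds the entropy as a function of the winning probability), direct comparison gives $\fgamma(\delta_1)\ge \fgamma(\delta_\perp)\ge \fgamma(\delta_0)$, which produces the stated formula for $\Max(\fgamma)$. For $\MinSigamma(\fgamma)$, I would invoke the identity~\eqref{eqn:cross_mtf_eq_cond}: for any $q'$ with $\Sigamma(q')\ne\emptyset$, the crossover value collapses back to $\fgamma(q') = f(q) = \lambda\nu + \Clamb$ for the underlying single-round testing distribution $q$ with winning probability $\nu:=q(1)$. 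Since single-round CHSH quantum correlations with uniform inputs realize $\nu\in[1-w_Q,w_Q]$ (Tsirelson's bound $w_Q=(2+\sqrt 2)/4$ giving the upper endpoint, and the lower endpoint following by relabelling one party's outputs), minimizing the affine $\lambda\nu+\Clamb$ over this range yields $\lambda(1-w_Q)+\Clamb$. The inequality (as opposed to equality) in the statement accommodates the possibility that the zero-probability constraints further shrink the realizable range of $\nu$, which can only raise the minimum.

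For the variance, my plan is to compute the second moment
\begin{equation*}
\sum_{c\in\cC'} q'(c)\,\fgamma(\delta_c)^2 = (1-\gamma)\fperp^{\,2} + \gamma\bigl[\nu\,\fgamma(\delta_1)^2 + (1-\nu)\fgamma(\delta_0)^2\bigr]
\end{equation*}
and subtract $\fgamma(q')^2=(\lambda\nu+\Clamb)^2$ using \eqref{eqn:cross_mtf_eq_cond} once more. The result is a quadratic in $\nu$ with leading coefficient $-\lambda^2$, so it completes to $-\lambda^2(\nu-\nu_0)^2+D$, and maximizing over the realizable $\nu\in[1-w_Q,w_Q]$ gives the stated expression. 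I expect the main obstacle to be the algebraic bookkeeping at this step: matching the coefficient of $\nu$ demands $\gamma[\fgamma(\delta_1)^2-\fgamma(\delta_0)^2]-2\lambda\Clamb = 2\lambda^2\nu_0$, which after expanding $\fgamma(\delta_1)^2-\fgamma(\delta_0)^2 = \tfrac{\lambda}{\gamma}\bigl[\tfrac{\lambda}{\gamma} + 2(1-\tfrac{1}{\gamma})\lambda\nu' + 2\Clamb\bigr]$ collapses cleanly to $\nu_0=\tfrac{1}{2\gamma}+(1-\tfrac{1}{\gamma})\nu'$; matching the constant term is more delicate, since one must verify that the $\Clamb^{\,2}$ contributions and all $\nu'$-linear contributions from $(1-\gamma)\fperp^{\,2}+\gamma\,\fgamma(\delta_0)^2-\Clamb^{\,2}+\lambda^2\nu_0^{\,2}$ cancel exactly, leaving only $\tfrac{\lambda^2}{4\gamma^2}+\tfrac{1}{\gamma}(1-\tfrac{1}{\gamma})\lambda^2(1-\nu')\nu'$. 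Once these identifications are in place, the three parts of the lemma follow, with the boundary-versus-interior location of the optimizer in $[1-w_Q,w_Q]$ being absorbed into the outer $\max$ in the statement.
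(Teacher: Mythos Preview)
Your proposal is correct and follows essentially the same route as the paper: compute the vertex values of the affine $\fgamma$ from the parameterization $\fperp=\lambda\nu'+\Clamb$, read off $\Max(\fgamma)$ by comparing them, reduce $\MinSigamma(\fgamma)$ to $\MinSig(f)$ via the identity~\eqref{eqn:cross_mtf_eq_cond} and the Tsirelson range $\nu\in[1-w_Q,w_Q]$, and obtain $\VarSigamma(\fgamma)$ by expanding the second moment, subtracting $(\lambda\nu+\Clamb)^2$, and completing the square. Your treatment of the $\MinSigamma$ inequality (noting that zero-probability constraints can only shrink the realizable $\nu$-range) is in fact slightly more explicit than the paper's.
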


\begin{proof}
Following the steps shown in~\cite{Liu21}, we first write the properties of the crossover min-tradeoff function as follows.
\begin{equation}
\begin{aligned}
\Max(\fgamma) &=\max \left\{\frac{1}{\gamma}\Max(f)+(1-\frac{1}{\gamma})\fperp, \fperp \right\}, \\
\MinSigamma(\fgamma) &=\MinSig(f), \\
\VarSigamma(\fgamma) &= \max_{q:\Sigamma(q)\ne \emptyset} \gamma\sum_{c\neq\perp}q(c)\fgamma^2(\delta_c)+(1-\gamma)\fperp^2-f^2(q). \label{eqn:prop_cross_mtf_proof}
\end{aligned}
\end{equation}
Since $\fperp\in[\Min(f), \Max(f)]$, we redefine   $\fperp=\lambda\nu'+\Clamb$ with $\nu'\in[0,1]$. We can simplify the first quantity.
\begin{equation}
\begin{aligned}
\Max(\fgamma) &=\max \left\{\frac{1}{\gamma}(\lambda+\Clamb)+(1-\frac{1}{\gamma})\fperp, \fperp \right\} \\
&=\max \left\{\fperp+\frac{1}{\gamma}(1-\nu')\lambda, \fperp \right\} \\
&= \fperp+\frac{1}{\gamma}(1-\nu')\lambda \\
&= (1-\frac{1}{\gamma})\lambda\nu' + \frac{\lambda}{\gamma} + \Clamb.
\end{aligned}
\label{eqn:max_cross_mtf_proof}
\end{equation}
The third equation holds by the fact that $\gamma >0, \lambda >0$ and $1-\nu'\ge 0$.

For the second equation in Eq.\eqref{eqn:prop_cross_mtf}, the minimum is taken over the set $\Sigamma$, that the equality holds according to Eq.\eqref{eqn:cross_mtf_eq_cond}.

To prove the last equation in Eq.\eqref{eqn:prop_cross_mtf}, first we compute the expression with a summation:
\begin{align}
&S = \sum_{c\neq\perp} q(c)\fgamma^2(\delta_c) = \nu \fgamma^2(\delta_1) + (1-\nu) \fgamma^2(\delta_0) \nonumber\\
&= \nu \left[\fperp + \frac{1}{\gamma}(1-\nu')\lambda\right]^2 + (1-\nu) \left(\fperp - \frac{1}{\gamma}\nu'\lambda\right)^2 \nonumber\\
&= \nu\left[2\fperp + \frac{1}{\gamma}(1-2\nu')\lambda\right] \cdot \frac{\lambda}{\gamma} + \left(\fperp - \frac{1}{\gamma}\nu'\lambda\right)^2.
\label{eqn:var_cross_mtf_proof}
\end{align}
The second line holds by the fact that
\begin{equation*}
\begin{aligned}
\fgamma(\delta_1) &= \frac{1}{\gamma}(\lambda+\Clamb) + (1-\frac{1}{\gamma})\fperp \\
&= \fperp + \frac{1}{\gamma}(1-\nu')\lambda, \\
\fgamma(\delta_0) &= \frac{1}{\gamma}\Clamb + (1-\frac{1}{\gamma})\fperp \\
&= \fperp - \frac{1}{\gamma}\nu'\lambda.
\end{aligned}
\end{equation*}
For a distribution in the set $\Sigamma$, the winning probability satisfies $\nu\in[1-w_Q,w_Q]$, where $w_Q=(2+\sqrt{2})/4$. Put Eq.\eqref{eqn:var_cross_mtf_proof} back to the third equation in Eq.\eqref{eqn:prop_cross_mtf_proof}, we have
\begin{equation}
\begin{aligned}
\VarSigamma(\fgamma) =& \max_{\nu\in [1-w_Q,w_Q]} \lambda\nu\left[2\fperp + \frac{1}{\gamma}(1-2\nu')\lambda\right] \\ &+ \gamma \left(\fperp - \frac{1}{\gamma}\nu'\lambda\right)^2 + (1-\gamma)f^2_\perp - (\lambda\nu+\Clamb)^2 \\
=& \max_{\nu\in [1-w_Q,w_Q]} -\lambda^2\nu^2 + \lambda^2\left[2\nu'+\frac{1}{\gamma}(1-2\nu')\right]\nu \\
&+ \left(f^2_\perp - 2\nu'\lambda \fperp + \frac{\nu'^2\lambda^2}{\gamma} - \Clamb^2\right).
\end{aligned}
\label{eqn:var_cross_mtf_proof-3}
\end{equation}

Eq.\eqref{eqn:var_cross_mtf_proof-3} is a quadratic function of $\nu$, and therefore we can rewrite it as
\begin{equation}
\VarSigamma(\fgamma) = \max_{\nu\in [1-w_Q,w_Q]} -\lambda^2(\nu-\nu_0)^2+D,
\end{equation}
where $\nu_0=\nu'+\frac{1}{\gamma}(\frac{1}{2}-\nu')=\frac{1}{2\gamma}+(1-\frac{1}{\gamma})\nu'$ and
\begin{align*}
D =& f^2_\perp - 2\lambda\nu' \fperp + \frac{1}{\gamma} \lambda^2\nu'^2 - \Clamb^2 + \lambda^2\nu^2_0 \\
=& (\lambda^2\nu'^2 + 2\lambda \Clamb\nu' +\Clamb^2) -2\lambda\nu'(\lambda\nu'+\Clamb) \\
&+ \frac{1}{\gamma} \lambda^2\nu'^2 - \Clamb^2 + \lambda^2\left[\frac{1}{2\gamma}+(1-\frac{1}{\gamma})\nu'\right]^2 \\
=& - \left(1-\frac{1}{\gamma}\right)\lambda^2\nu'^2 \\
&+ \lambda^2 \left[\frac{1}{4\gamma^2}+\frac{1}{\gamma}(1-\frac{1}{\gamma})\nu' + (1-\frac{1}{\gamma})^2\nu'^2\right] \\
=& \frac{\lambda^2}{4\gamma^2} + \frac{1}{\gamma}(1-\frac{1}{\gamma})\lambda^2(1-\nu')\nu'.
\end{align*}
The last equation holds by combining $-(1-\frac{1}{\gamma})\lambda^2\nu'^2$ and $(1-\frac{1}{\gamma})^2\lambda^2\nu'^2$ together as $-\frac{1}{\gamma}(1-\frac{1}{\gamma})\lambda^2\nu'^2$.
\end{proof}


\section{Protocol security and finite rate lower bound}
\label{appendix:Protocol security and finite rate lower bound}

In Protocol~\ref{protocol:DI randomness generation}, given $\wexp\in[w_C, w_Q]$, $\gamma\in(0,1]$, $\nu'\in[0,1]$, and $\epsilon, \wtol, \ztol, \beta\in(0,1)$, the lower bound on the smooth min-entropy can be derived by directly applying the GEAT (Theorem~\ref{thm:geat_with_testing}.)
\begin{equation}
\Hmineps(K^n|I^nE'_{n}) \ge h(\wexp-\wtol) - \Delta.
\label{eqn:smooth_min_entropy_bound}
\end{equation}
$\Delta$ is given below
\begin{align*}
\Delta =& \frac{\ln 2}{2}\beta \left[\log 9 + V\right]^2 \\
&- \frac{1}{n}\bigl[ \frac{1+\beta}{\beta} \log(1-\sqrt{1-\epsilon^2}) + \frac{1+2\beta}{\beta}\log\Pr[\Omega]\bigr] \\
&- \frac{1}{6\ln 2} \frac{\beta^2}{(1-\beta)^3} \zeta^\beta \ln^3(\zeta+e^2), 
\end{align*}
where $\beta=\frac{\alpha-1}{2-\alpha}=\frac{1}{2-\alpha}-1\in(0,1)$,
$V=\sqrt{2+D-\min_{\nu\in[1-w_Q,w_Q]}\lambda^2(\nu-\nu_0)^2}$, $\gamma_0=\frac{1-\gamma}{\gamma}$,
$\nu_0=\frac{1}{2\gamma}-\gamma_0\nu'$, $D=\frac{\lambda^2}{4\gamma^2} - \frac{\gamma_0}{\gamma} \lambda^2(1-\nu')\nu'$, $\zeta=2^{2+\lambda (\gamma_0 (1-\nu') + w_Q)}$, and $\Hsh$ is Shannon entropy. For the CHSH game, $w_C=0.75$ and $w_Q=(2+\sqrt{2})/4$.

\begin{proof}
To map the systems described in Theorem~\ref{thm:geat_with_testing}, we first combine all the statistical checks in the $i^\mathrm{th}$ round $C_i=(C_{\omega,i}, \vec{C}_{3b, i})$, where $\vec{C}_{3b, i} = (C_{3b, i}[00|00], C_{3b, i}[11|00], C_{3b, i}[10|11])$. The system records the adversary's total side information as $E_i=I^i T^i E'^i$, where $I_i$ contains all the classical side information except the round mark for testing or generation $T^i$. Denote the devices' internal memory as $R_i$.
The non-signaling condition can be verified by constructing the map $\cR_i\in\CPTP(E_{i-1}, E_i)$ with the isometries creating Eve's classical side information $I_i$.

Then by employing Theorem~\ref{thm:geat_with_testing} with the properties of the min-tradeoff function shown in Lemma~\ref{lem:prop_cross_mtf}, we can derive Eq.\eqref{eqn:smooth_min_entropy_bound}.
\end{proof}

To prove the security of the protocol, we need concentration inequality to upper-bound the completeness. For convenience, we choose Hoeffding's inequality, while in principle other concentration inequalities, such as Serfling's inequality~\cite{Zhou17}, Bernstein's inequality~\cite{Metger22Mar}, and Chernoff's bound~\cite{Tan22}, can also be used to derive the completeness bound.

\textbf{Hoeffding's inequality}~\cite[Proposition 1.2]{Bardenet15}
\label{lem:hoeffding}
Consider a finite sequence $x^N=x_1x_2...x_N$ with the mean $\mu=\frac{1}{N}\sum^N_{i=1} x_i$. If the $n$ samples drawn without replacement from the sequence $x^N$ is denoted by $X^n=X_1X_2...X_n$, then given $\epsilon > 0$, we have
\begin{equation}
\Pr[\frac{1}{n}\sum^n_{i=1} X_i-\mu \ge \epsilon] \le \exp\left(-\frac{2n{\epsilon}^2}{{(b-a)}^2}\right), \\
\label{eqn:hoeffding-1}
\end{equation}
\begin{equation}
\Pr[\mu - \frac{1}{n}\sum^n_{i=1} X_i \ge \epsilon] \le \exp\left(-\frac{2n{\epsilon}^2}{{(b-a)}^2}\right),
\label{eqn:hoeffding-2}
\end{equation}
where $a=\min x_i$ and $b=\max x_i$.

\begin{theorem}[Security of Protocol~\ref{protocol:DI randomness generation}]
Protocol~\ref{protocol:DI randomness generation}
Given the protocol parameters, $(n, \gamma, \wexp, \wtol, \ztol, \epsilon, \epsext)$, Protocol~\ref{protocol:DI randomness generation} is $\epsound$-sound and $\epscomp$-complete, where $\epsound = \epsext+2\epsilon$ and $\epscomp = 1 - (1-e^{-2\wtol^2  n})(1-e^{-2\ztol'^2 n})^{n_\mathrm{zero}}$, $n_\mathrm{zero}$ is the number of zero-probability constraints depending on the choice of the classes $\kappa$.
\end{theorem}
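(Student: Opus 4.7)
The plan is to treat completeness and soundness separately, since each reduces to a piece already assembled above.

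For completeness, I would condition on the honest implementation, in which every round is an i.i.d.\ execution with the devices realizing the target correlation at CHSH winning probability $\wexp$ and zero-probability marginals bounded above by some value strictly below $\ztol$. The protocol aborts only if at least one of the $1+n_\mathrm{zero}$ statistical checks in Step~3 of Protocol~\ref{protocol:DI randomness generation} fails. I would apply Hoeffding's inequality~\eqref{eqn:hoeffding-2} to the winning-count random variable $\sum_i C_{\omega,i}$---whose honest mean exceeds the threshold $(\wexp-\wtol)\gamma n$ by a margin proportional to $\wtol$---yielding a single-check failure probability bounded by $e^{-2\wtol^2 n}$, and apply Eq.~\eqref{eqn:hoeffding-1} similarly to each of the $n_\mathrm{zero}$ zero-probability counts, yielding $e^{-2\ztol'^2 n}$ per check, where $\ztol'$ is the gap between the honest zero-probability value and the tolerated $\ztol$. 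Lower-bounding the joint non-failure probability by the product of marginals (justified by the i.i.d.\ round structure of the honest execution) then gives $\Pr[\mathrm{NonAbort}|\mathrm{Honest}] \ge (1-e^{-2\wtol^2 n})(1-e^{-2\ztol'^2 n})^{n_\mathrm{zero}}$, whose complement is precisely the claimed $\epscomp$.

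For soundness, I would feed the GEAT-based smooth min-entropy bound~\eqref{eqn:smooth_min_entropy_bound} into the quantum-proof strong extractor of Definition~\ref{def:quantum-proof extractor}. Taking $\kext = n[h(\wexp-\wtol)-\Delta]$, the GEAT bound certifies $\Hmineps(K^n|I^n E'_n)_{\rho_{|\mathrm{NonAbort}}} \ge \kext$ on the post-selected (normalized) state, after which applying a $(\kext,\epsext)$-strong extractor with public seed $Y$ produces extracted output $Z$ satisfying
\begin{equation*}
\frac{1}{2}\|\rho_{ZY I^n E'_n|\mathrm{NonAbort}} - \mu_Z \otimes \mu_Y \otimes \rho_{I^n E'_n|\mathrm{NonAbort}}\|_1 \le \epsext + 2\epsilon
\end{equation*}
by Definition~\ref{def:quantum-proof extractor}. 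Identifying $(Z,Y)$ with the output register $K$ and $(I^n,E'_n)$ with the adversary's combined side information $E$ in Eq.~\eqref{eqn:soundness_def}, and using $\Pr[\mathrm{NonAbort}]\le 1$, the soundness parameter $\epsound = \epsext + 2\epsilon$ drops out immediately.

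I expect the main delicacy to lie in the completeness bookkeeping: the $1+n_\mathrm{zero}$ checks are evaluated on the same protocol transcript and so are not literally independent events, so to obtain the product form rigorously one must either exploit the per-round independence of the underlying Bernoulli summands in the honest scenario to factor the joint-pass probability, or replace the product by a union bound at a minor quantitative cost. The soundness step, by contrast, is essentially plug-and-play given the GEAT bound and the extractor's leftover-hash guarantee, provided one verifies that the classical side information $I^n$ (including the test-round marks $T^n$ and announced test-round outcomes) is already absorbed into the conditioning used in Theorem~\ref{thm:geat_with_testing}, so that no residual leakage escapes the accounting.
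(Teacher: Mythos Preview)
Your proposal is correct and follows essentially the same route as the paper: Hoeffding for each of the $1+n_\mathrm{zero}$ checks in the honest run to get completeness, and GEAT $+$ quantum-proof extractor (with $\Pr[\mathrm{NonAbort}]\le 1$) for soundness. The paper likewise obtains the product form by invoking $\Pr[\Omega_A\cap\Omega_B]\ge\Pr[\Omega_A]\Pr[\Omega_B]$ as a bare ``fact'' without further justification, so the dependency concern you flag in your last paragraph is real and is not resolved any more rigorously in the paper than in your sketch.
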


\begin{proof}
Consider the event of non-aborting $\Omega=[\text{NonAbort}]=[\text{Pass win-probability check}]\cup[\text{Pass zero-probability checks}]$. If any one of the checks does not pass, then set the key register $\rho_{K^n} = (\proj{\perp})^{\otimes n}$, where $\bra{k^n}(\ket{\perp})^{\otimes n} = 0$ for all $k^n \in \set{0,1}^n$. The final state of Protocol~\ref{protocol:DI randomness generation} can be written as
\begin{equation}
\rho_{K^n E} = \Pr[\Omega] \rho_{K^n E_{|\Omega}} + (1-\Pr[\Omega]) (\proj{\perp})^{\otimes n} \otimes \rho_{E_{|\neg\Omega}}.
\end{equation}

If the smooth min-entropy $\Hmineps(K^n|E)_{\rho_{K^n E_{|\Omega}}}$ is lower-bounded by $\kext$, the soundness of Protocol~\ref{protocol:DI randomness generation} is guaranteed by Lemma~\ref{def:quantum-proof extractor} after applying a $(\kext,\epsext)$-strong extractor
\begin{equation}
\frac{1}{2}\|\rho_{K^n E_{|\Omega}}-\mu_K\otimes\rho_{E_{|\Omega}}\|_1 \le \epsext+2\epsilon,
\end{equation}    
where $\mu_K=\frac{1}{|\cK|}\id$ is the maximally mixed state on $\Hilbert_K$. And by the definition of soundness (Eq.\eqref{eqn:soundness_def}) and the fact that $\Pr[\Omega]\le 1$, we can choose $\epsound=\epsext+2\epsilon$.

We derive the completeness by bounding the aborting probability due to the i) check of the score $C_\omega^n$ and ii) check of the zero-probability constraints $\vec{C}_{3b}^n$, taking $\kappa=3b$ for example~\footnote{For other classes, different the number of zero-probability constraints would approximately give a different scaling parameter for the zero-probability constraint related aborting probability.}.

By Heoffding's inequality~\eqref{eqn:hoeffding-2}, let $c^N$ be the results of $N$ rounds of the nonlocal game: $c_i=1$ ($c_i=0$) if the players win (lose) the $i^\mathrm{th}$ round of the game. Take large enough $N$ such that the statistics of $c^N$ reflect nearly the true probability distribution, and $\wexp=\mathbb{E}[c]$ is the expectation value of the winning probability. If in the experiment, we only run $n$ rounds of the nonlocal game and the results of that $n$ rounds as $C^n$. The aborting probability can be upper-bounded by
\begin{multline}
\Pr[\mathrm{aborted\ by\ score\ check}] = \Pr\left[\mu-\frac{1}{n}\sum_i C_i > \wtol\right] \\
\le \epscomp \le e^{-2\wtol^2  n},
\label{eqn:win_completeness}
\end{multline}
where $\frac{1}{n}\sum_i C_i$ is the observed winning probability.

Similarly, let $c^N$ be the occurrence of one of the input-output combinations $(a,b,x,y)\in\cS_{\kappa}$ in $N$ rounds of the nonlocal game. Take large enough $N$ such that the statistics of $c^N$ reflect nearly the true probability distribution, and $\mu=\mathbb{E}[c]$ is close to the expectation value of the probability, $P(a,b|x,y)$. If in the experiment, we only run $n$ rounds of the nonlocal game, and denote the occurrence of $(a,b,x,y)$ in that $n$ rounds as $C^n$. Then, the aborting probability due to the zero probability check can be upper-bounded by applying Hoeffding's inequality~\eqref{eqn:hoeffding-1},
\begin{multline*}
\Pr[\mathrm{aborted\ by\ one\ zero\ check}] \\
= \Pr[\frac{1}{n}\sum_i C_i -\mu \ge \ztol'] \le e^{-2\ztol'^2 n},
\end{multline*}
where the equality $\mu+\ztol'=\ztol$ holds and $\ztol$ is the zero tolerance level for Protocol~\ref{protocol:DI randomness generation}. Since $\mu >0$ and $\ztol' >0$, the value, $\ztol'$  is strictly smaller than $\ztol$. Since there are three zero-probability constraints in class 3b, the effective completeness for all the zero-probability checks can be bounded by
\begin{equation}
\epscz \le 1 - (1-e^{-2\ztol'^2 n})^3.
\label{eqn:zero_prob_completeness}
\end{equation}

Combine Eq.\eqref{eqn:win_completeness} and Eq.\eqref{eqn:zero_prob_completeness} together with the fact that for two events $\Omega_A$ and $\Omega_B$,
$\Pr[\Omega_A \cap \Omega_B]\ge\Pr[\Omega_A]\Pr[\Omega_B]$, we have the non-aborting probability conditioned on the honest behavior $\Pr[NonAbort|Honest]\equiv\Pr[\Omega_{|Hon}]$:
\begin{equation}
\begin{aligned}
\Pr[\Omega_{|Hon}]&=\Pr[(\mathrm{pass\ win\ check})\cap(\mathrm{pass\ zero\ check})] \\
&\ge \Pr[\mathrm{pass\ win\ check}]\Pr[\mathrm{pass\ zero\ check}] \\
&=(1-\Pr[\mathrm{aborted\ by\ win\ check}])\\
&\quad\cdot (1-\Pr[\mathrm{aborted\ by\ zero\ check}]) \\
&\ge (1-\epscw)(1-\epscz) \\
&\ge (1-e^{-2\wtol^2  n})(1-e^{-2\ztol'^2 n})^3.
\end{aligned}
\label{eqn:non_aborting_prob}
\end{equation}
We obtain the aborting probability of the protocol under the honest implementation:
\begin{equation}
\begin{aligned}
\Pr[\mathrm{Abort}|\mathrm{Honest}] &= 1-\Pr[\Omega_{|Hon}] \\
&\le 1 - (1-e^{-2\wtol^2  n})(1-e^{-2\ztol'^2 n})^3.
\label{eqn:aborting_prob}
\end{aligned}
\end{equation}
\end{proof}


\section{Numerical implementation and results}
\label{appendix:Numerical results}

Fix the protocol parameter $\wtol=10^{-4}$ and $\epsilon=10^{-12}$. Two variables still need to be optimized, i) $\nu'$ from the construction of crossover min-tradeoff function and ii) $\beta=$ which is directly related to the order $\alpha$ of the $\alpha$-Renyi entropy. For each fixed number of rounds $N$, we scan through the valid ranges, $\nu'\in[0,1]$ and $\beta\in(0,1)$, to find the values that lead to the highest rates. As an illustration of this procedure, we plot the resulting heat map in Fig~\ref{fig:beta_scan} to show the optimization of $\beta$.

\begin{figure}[ht!]
    \centering
    \includegraphics[width=.9\linewidth]{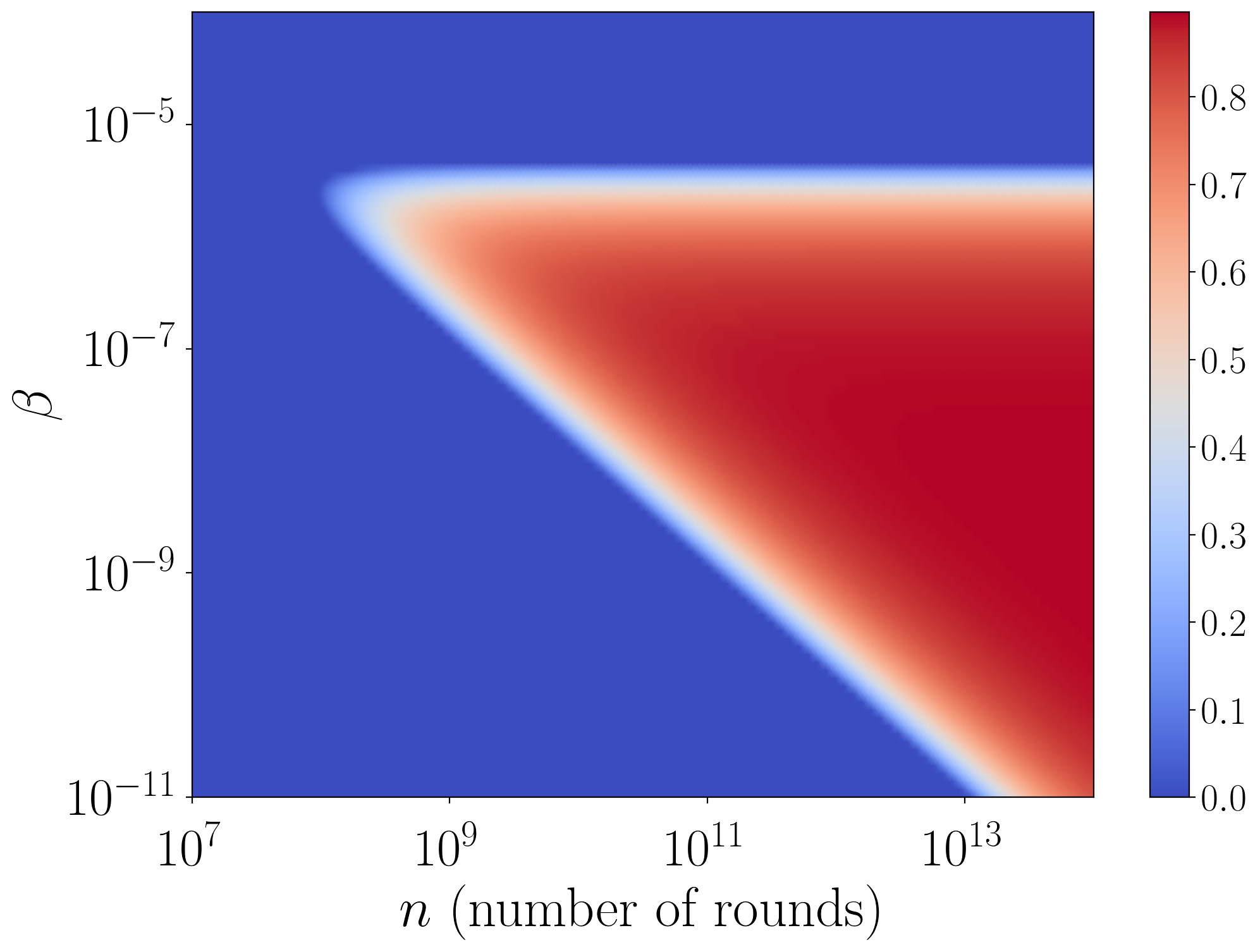}
    \caption{Scanning $\beta$ at optimal $\nu'$. The hotter color indicates a higher rate.}
    \label{fig:beta_scan}
\end{figure}

We use Eq.\eqref{eqn:non_aborting_prob} to upper-bound the non-aborting probability $\Pr[\Omega]$ instead of $1-\epscw-\epscz$, and taking $\ztol'=\ztol/2$ for convenience.

The comparison of Protocol~\ref{protocol:DI randomness generation}, CHSH, and the $\delta$-family Bell inequality~\cite{Wooltorton22} for single-party and blind randomness with maximal winning probability are shown in Fig.~\ref{fig:finite_rate}.

\end{document}